\newcommand{\ket}[1]{|#1\rangle}
\newcommand{\bra}[1]{\langle#1|}
\let\oldnl\nl
\newcommand{\nonl}{\renewcommand{\nl}{\let\nl\oldnl}}
\newtheorem{defn}{Definition}
\newtheorem{thm}{Theorem}
\newtheorem{lemma}[thm]{Lemma}
\newtheorem{corollary}[thm]{Corollary}
\newenvironment{subtheorem}[1]{%
  \def\subtheoremcounter{#1}%
  \refstepcounter{#1}%
  \protected@edef\theparentnumber{\csname the#1\endcsname}%
  \setcounter{parentnumber}{\value{#1}}%
  \setcounter{#1}{0}%
  \expandafter\def\csname the#1\endcsname{\theparentnumber.\Alph{#1}}%
  \ignorespaces
}{%
  \setcounter{\subtheoremcounter}{\value{parentnumber}}%
  \ignorespacesafterend
}
\def\@caption@fignum@sep{\ (Color online).\ }
\newcounter{parentnumber}
\newcounter{proof}
\begin{abstract}
We make the case for studying the complexity of approximately simulating (sampling) quantum systems for reasons beyond that of quantum computational supremacy, such as diagnosing phase transitions.
We consider the sampling complexity as a function of time $t$ due to evolution generated by spatially local quadratic bosonic Hamiltonians.
We obtain an upper bound on the scaling of $t$ with the number of bosons $n$ for which approximate sampling is classically efficient.
We also obtain a lower bound on the scaling of $t$ with $n$ for which any instance of the boson sampling problem reduces to this problem and hence implies that the problem is hard, assuming the conjectures of Aaronson and Arkhipov [{Proc.\ 43rd {{Annu}}.\ {{ACM Symp}}.\ {{Theory Comput}}.\ {{STOC}} '11}].
This establishes a dynamical phase transition in sampling complexity.
Further, we show that systems in the Anderson-localized phase are always easy to sample from at arbitrarily long times.
We view these results in the light of classifying phases of physical systems based on parameters in the Hamiltonian.
In doing so, we combine ideas from mathematical physics and computational complexity to gain insight into the behavior of condensed matter, atomic, molecular and optical systems.
\end{abstract}
\begin{document}
\title{Dynamical phase transitions in sampling complexity}

\author{Abhinav Deshpande}
\affiliation{Joint Center for Quantum Information and Computer Science, NIST/University of Maryland, College Park, MD 20742, USA}
\affiliation{Joint Quantum Institute, NIST/University of Maryland, College Park, MD 20742, USA}
\author{Bill Fefferman}
\affiliation{Joint Center for Quantum Information and Computer Science, NIST/University of Maryland, College Park, MD 20742, USA}
\affiliation{Electrical Engineering and Computer Sciences, University of California, Berkeley, CA 
94720, USA}
\author{Minh C.\ Tran}
\affiliation{Joint Center for Quantum Information and Computer Science, NIST/University of Maryland, College Park, MD 20742, USA}
\affiliation{Joint Quantum Institute, NIST/University of Maryland, College Park, MD 20742, USA}
\author{Michael Foss-Feig}
\affiliation{United States Army Research Laboratory, Adelphi, MD 20783, USA}
\affiliation{Joint Center for Quantum Information and Computer Science, NIST/University of Maryland, College Park, MD 20742, USA}
\affiliation{Joint Quantum Institute, NIST/University of Maryland, College Park, MD 20742, USA}
\author{Alexey V.\ Gorshkov}
\affiliation{Joint Center for Quantum Information and Computer Science, NIST/University of Maryland, College Park, MD 20742, USA}
\affiliation{Joint Quantum Institute, NIST/University of Maryland, College Park, MD 20742, USA}
\maketitle

In the quest towards building scalable and fault-tolerant quantum computers, demonstration of a quantum speedup over the best possible classical computers is an important milestone and is termed \emph{quantum computational supremacy} \cite{Preskill2012a}.
There are several candidates for tasks where such a speedup could be demonstrated \cite{Terhal2002, Terhal2002a, Bremner2010, Aaronson2011, Jozsa2013, Fefferman2016, Bremner2016, Farhi2016a, Aaronson2016a, Fefferman2017, Gao2017, Harrow2017}, where the problem is to simulate a quantum system in the sense of approximate sampling.
However, there has also been some debate about the required system size before one can claim quantum computational supremacy, due to improved simulation techniques and algorithms \cite{Neville2017, Clifford2018, Dalzell2018}.
In this Letter, we consider the impact of the field of quantum computational supremacy on other areas of physics and show that studying the complexity of simulating physical systems is useful for understanding phase transitions.

Here we consider the classical complexity of approximate sampling, referred to as ``sampling complexity''.
This is the task of producing samples from a distribution close to the probability distribution occurring in a quantum system upon measurement in a standard basis.
This task is a good notion of what it means to simulate physics on a classical computer since it captures how well a computer can mimic an experiment in which one can measure the output at several sites.
When we consider sampling complexity as a function of system parameters, the system can be classified as easy in some regimes and hard in some others.
Since the designations ``easy'' and ``hard'' are exhaustive and there is no smooth way to go from one regime to another, we posit that this transition from easy to hard happens abruptly, a phenomenon very reminiscent of phase transitions.
Just like order parameters are zero on one side of a phase transition and nonzero on another, sampling complexity is different on either side of the transition, and can be used to draw phase boundaries as a function of time and other system parameters.
These boundaries can be different from those drawn by more conventional order parameters, signifying new physics in otherwise well-studied systems.
Indeed, phase transitions in average-case complexity have been studied both in the classical \cite{Kirkpatrick1994} and quantum regime \cite{Laumann2010}.

In this Letter, we show that transitions in sampling complexity \cite{Seshadreesan2015} are indicative of physical transitions.
We consider a system of $n$ bosons hopping from one site to another on a lattice of $m$ sites and study the sampling complexity as a function of time for an initial product state.
We show that it goes from easy to hard as the scaling of evolution time $t$ with the number of bosons $n$ increases, thereby exhibiting a dynamical phase transition \cite{Heyl2013, Heyl2017}.
We find that the timescale at which the complexity changes is the timescale when interference effects start becoming relevant.
We conjecture that in general, this is linked to the Ehrenfest timescale at which quantum effects in a system become considerable \cite{Rozenbaum2017}.
We also show that systems in the Anderson-localized phase are always easy to simulate.
We use the Lieb-Robinson bound \cite{Lieb1972} as an ingredient in our proof of easiness.

\textit{Setup.---} The model consists of free bosons hopping on a lattice in $d$ dimensions (denoted $d$-D) with sites labeled by indices $i,j$.
Our results in this paper can be applied to linear optics as well, with the bosonic sites being replaced by photonic modes.
The Hamiltonian is given by $H = \sum_{i, j=1}^m J_{ij}(t) a_i^\dagger a_j$, where $a_i^\dagger$ is the creation operator of a boson at the $i$'th site.
$J(t)$, which can be time-dependent in general, is an $m \times m$ Hermitian matrix that encodes the connectivity of the lattice.
One way to show hardness in the boson sampling proposal by Aaronson and Arkhipov (AA) \cite{Aaronson2011} is to generate any linear optical unitary $U$ acting on the bosonic sites, in particular, Haar-random unitaries.
Any $U$ can be generated through a free boson Hamiltonian by taking $H = i \log U$ and evolving it for unit time.
However, this Hamiltonian can require arbitrarily long-range hops on the lattice in general.
Since such long-ranged Hamiltonians may not be realistic, we consider the sampling complexity of a bosonic Hamiltonian with nearest-neighbor hops.
$J_{ij}$ is nonzero only if $i = j$ or $i$ and $j$ label adjacent sites.
We further restrict $J_{ij}$ to satisfy $|J_{ij}| \leq 1$ in order to set an energy scale.
Our proofs remain valid even if we allow the diagonal terms $J_{ii}$ to be unbounded.

One can efficiently solve the equations of motion $i \dot{a_i}^\dagger(t) = [a_i^\dagger(t), H(t)]$ on a classical computer to obtain $a_i^\dagger (t) = \sum_k a_k^\dagger (0) R_{ki}(t)$ for some transformation matrix $R$.
From here onward, we shall take $R(t)$ to be the input to the problem, since it can be determined from the input Hamiltonian $H$ and time $t$ in time $\mathrm{poly}(m, \log t)$.

\begin{figure}[]
\includegraphics[width=0.9\linewidth]{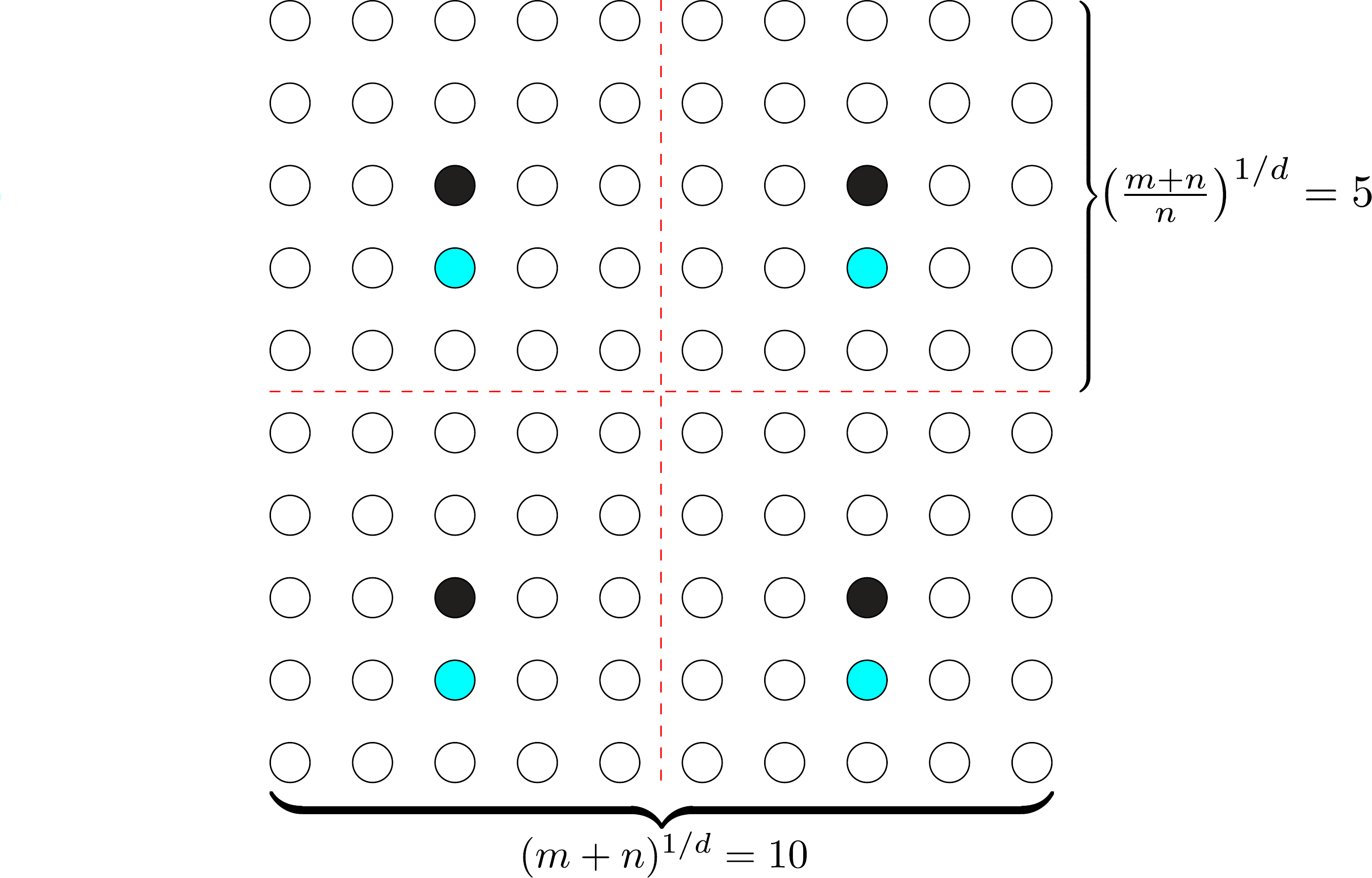}
\caption{An example of the initial state in $d = 2$ dimensions.
Here $m = 96$, $n = 4$, $\beta = 3$ and $c_1 = 3/2$.
The black circles represent sites with a single boson.
The cyan circles represent the ancillas.}
\label{fig_lattice}
\end{figure}

The $m$ sites in the problem are numbered from 1 to $m$, and together with $n$ ancilla sites, are arranged in a lattice of side length $(m+n)^{1/d}$ in $d$ dimensions.
The initial state has $n$ bosons equally spaced in the lattice as shown in Fig.\ \ref{fig_lattice}.
We take $m = c_1 n^\beta$, where $\beta$ controls the sparsity of occupied sites in the lattice and can be set to 5 as required for the hardness of boson sampling \cite{Aaronson2011}.
The minimum spacing between any two bosons in the initial state is $2L = \left(\frac{m + n}{n} \right)^{1/d} > c_1^{1/d}n^\frac{\beta -1}{d}$.
The quantity $L$ is an important length scale in the problem.
The ancillas in the lattice, marked in cyan, are not counted as part of the $m$ sites and are present in order to accelerate the time required to construct an arbitrary unitary, which is useful for the hardness result.
Their presence does not change the scaling of quantities like $L$ with $n$.

The input states are described by vectors of the form $r =(r_1, \ldots, r_m)$, specifying the number of bosons on each site, so that $r_1 + \ldots r_m = n$.
Measurement in the boson number basis defines a distribution $\mathcal{D}_U$, which we aim to sample from.
The probability of finding an output state $s = (s_1,s_2,\ldots,s_m)$ is given by 
\begin{align}
\Pr_{\mathcal{D}_U}[s] = \frac{1}{r! s!} |\mathrm{Per}(A)|^2, \label{eq_probability}
\end{align}
where $r! := r_1! \ldots r_m!$ (with $s!$ defined similarly), $A_{n \times n}$ is a matrix formed by taking $s_i$ copies of the $i$'th column and $r_j$ copies of the $j$'th row of $R$ in any order, and $\mathrm{Per} (A)$ denotes the permanent of $A$ (see Refs.\ \cite{Aaronson2011a, SM_complexity_OP} for details).

For the particular choice of initial states described in Fig.\ \ref{fig_lattice}, the task is to sample from a distribution that is close to $\mathcal{D}_U$ in variation distance when given a description of the unitary $R(t)$.
We now formalize the notion of efficient sampling.

\begin{defn}{Efficient sampler:}
An efficient sampler is a classical randomized algorithm that takes as input the unitary $R_{ij}$ and outputs a sample s from a distribution $\mathcal{D}_{\mathcal{O}}$ such that the variation distance between the distributions $\epsilon = \| \mathcal{D}_\mathcal{O} - \mathcal{D}_U \| \leq O(\frac{1}{\mathrm{poly}(n)})$, in runtime $\mathrm{poly}(n)$ (see notes \footnote{Our easiness results also apply to a stronger notion of approximate sampling, namely that the algorithm samples from an $\epsilon$-close distribution in runtime $\mathrm{poly}(n,1/\epsilon)$}, \footnote{1. $f(n)=O(g(n))$ if $\forall n \geq n_0, f(n) \leq cg(n)$. \\ 2. $f(n) = \Omega (g(n)) \Leftrightarrow g(n) = O(f(n))$. \\ 3. $f(n) = \Theta (g(n)) $ if $f(n) = \Omega (g(n))$ and $f(n) = O(g(n))$.}).
\end{defn}

We call the sampling problem \emph{easy} if there exists an efficient sampler for the problem in the stated regime.
Conversely, the problem is hard if there cannot be an efficient sampler.
Since a negative statement such as the inexistence of an algorithm is difficult to prove, our practical definition of hardness of a sampling problem is if it is at least as hard as boson sampling.
This enables us to use the results of AA \cite{Aaronson2011} to claim the hardness of sampling in some regimes.
In doing so, our hardness results ultimately rely on the truth of AA's conjectures.
One of these conjectures concerns the hardness of additively approximating $|\mathrm{Per}(G)|^2$ for Gaussian-random $G$ with high probability (for which AA give reasonable evidence).
The second, more widely believed conjecture is that the polynomial hierarchy, an infinite tower of complexity classes, does not ``collapse'', i.e. is truly infinite.

We restrict our attention to two special cases where we can show the existence of an efficient sampling algorithm: i) when the system evolves for a time smaller than the system timescale $L/v$ (where $v$ is the Lieb-Robinson velocity of information spreading in the lattice, defined more precisely in Eq.\ (\ref{eq_LRbound})), and ii) when there is Anderson localization in the system \cite{Anderson1958}.
These two cases correspond to a promise on the input unitary $R$.
We now state our main results.

\begin{subtheorem}{thm}
\begin{thm}[Easiness of simulation at short times] \label{thm_mainthm} 
For $\beta > 1$ and for all dimensions d, the sampling problem is easy for all $t \leq 0.9L/v$, i.e.\ $\forall t \leq c_2 n^{(\beta - 1)/d}$ for some constant $c_2$.
\end{thm}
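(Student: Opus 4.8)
The plan is to use the Lieb-Robinson bound to show that for $t \le 0.9L/v$ the $n$ bosons evolve in effectively disjoint regions of the lattice, so that the output distribution factorizes, up to superpolynomially small corrections, into a product of single-particle distributions that can each be sampled in $\mathrm{poly}(n)$ time.

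First I would invoke the single-particle Lieb-Robinson bound on the propagator $R(t)$: for lattice sites $k,i$ with distance $d(k,i) > vt$ one has $|R_{ki}(t)| \le c\,e^{-\mu(d(k,i)-vt)}$ for constants $c,\mu>0$ [cf.\ Eq.~(\ref{eq_LRbound})]. Since the initial bosons occupy sites $p_1,\dots,p_n$ separated by at least $2L$ and $vt \le 0.9L < L$, the balls $B_j=\{q: d(p_j,q)\le L\}$ are pairwise disjoint. For each boson the single-particle distribution $P_j(q):=|R_{p_j,q}(t)|^2$ is normalized by unitarity of $R$, and because a $d$-dimensional ball of radius $L$ contains only $O(L^d)$ sites the leakage obeys $\delta_j := \sum_{q\notin B_j} P_j(q) \le \mathrm{poly}(m)\,e^{-2\mu(L-vt)} \le \mathrm{poly}(m)\,e^{-0.2\mu L}$. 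With $L > \tfrac12 c_1^{1/d} n^{(\beta-1)/d}$ and $\beta>1$, this $\delta_j$ is superpolynomially small in $n$.

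Second I would show that $\mathcal{D}_U$ is superpolynomially close to the product distribution $\prod_j \tilde P_j$, where $\tilde P_j$ is $P_j$ restricted to $B_j$ and renormalized. Crucially, I would \emph{not} expand the permanent of Eq.~(\ref{eq_probability}) termwise, since its $n!$ terms would amplify the exponentially small cross-ball amplitudes and destroy the estimate for $\beta$ close to $1$. Instead I would control the error through few-body marginals. The one-particle correlation matrix evolves to $\langle a_q^\dagger(t) a_{q'}(t)\rangle = \sum_j R_{p_j,q}R^*_{p_j,q'}$, so the expected occupation outside all balls is $\sum_{q\notin\cup_j B_j}\sum_j P_j(q) \le \sum_j \delta_j \le n\delta$ with $\delta=\max_j\delta_j$; by Markov's inequality the probability that any boson is found outside its ball is at most $n\delta$. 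A similar two-particle estimate bounds the probability of two bosons bunching in a single ball by $O(n^2\delta)$, since a second boson in $B_j$ requires some other single-particle wavefunction to have leaked into $B_j$. On the complementary ``good'' configurations (exactly one boson per ball, hence at distinct sites with $r!=s!=1$) the submatrix $A$ is block-diagonal with $1\times1$ blocks up to cross-ball entries of size $O(\sqrt\delta)$, so $\Pr_{\mathcal{D}_U}[s]=\prod_j P_j(q_j)$ up to superpolynomially small relative error. Summing these contributions bounds $\|\mathcal{D}_U - \prod_j\tilde P_j\|$ by $O(n^2\delta)=O(1/\mathrm{poly}(n))$.

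Finally, the efficient sampler draws $q_j\sim\tilde P_j$ independently for each $j$ and reports the resulting occupancies. Each $\tilde P_j$ is supported on $|B_j|=O(L^d)=O(n^{\beta-1})$ sites with weights $|R_{p_j,q}|^2$ computable in $\mathrm{poly}(n)$ time from the given $R$, so sampling all $n$ independent factors costs $\mathrm{poly}(n)$, establishing easiness for all $t\le c_2 n^{(\beta-1)/d}$. I expect the main obstacle to be the total-variation bound of the second step: the fatal $n!$ blowup from the permanent must be avoided by phrasing leakage and bunching as one- and two-particle marginal quantities, controlled by unitarity of $R$ and the Lieb-Robinson tail, so that errors compose additively across bosons rather than multiplicatively through the permanent expansion.
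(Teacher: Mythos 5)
Your overall strategy (Lieb--Robinson confinement of each boson to a ball around its initial site, replacement of $\mathcal{D}_U$ by a product of single-particle distributions, classical sampling of each factor) matches the paper's intuition, and your leakage estimate and runtime analysis are fine. The paper's algorithm is essentially yours without the restriction to balls: it samples the distinguishable-particle Markov process $\mathcal{P}_{kl}=|R_{kl}|^2$ on all $m$ sites, whose output distribution $\mathcal{D}_{DP}$ is exactly the diagonal ($\sigma=\tau$) part of the permanent-squared in Eq.~(\ref{eq_probability}).

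The gap is in your second step. Writing $\|\mathcal{D}_U-\prod_j\tilde P_j\|\leq\tfrac12\Pr_{\mathcal{D}_U}[\mathrm{bad}]+\tfrac12\sum_{s\,\mathrm{good}}|\Pr_{\mathcal{D}_U}(s)-\prod_jP_j(q_j)|$, your one- and two-particle marginal estimates control only the first term; the second term is an irreducibly $n$-body quantity, and your treatment of it does not go through as stated. The claimed ``superpolynomially small relative error'' on a good configuration is false pointwise, because the diagonal entries $A_{jj}=R_{p_j,q_j}$ can themselves be arbitrarily small (even zero), so the off-diagonal corrections need not be small relative to $\prod_j|A_{jj}|^2$. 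The additive error per configuration is small, but it must then be summed over $\prod_j|B_j|=e^{\Theta(n\log L)}$ good configurations, and the naive bound (estimate $|\mathrm{Per}(A)|+|\prod_jA_{jj}|\leq 2$ and sum $\sum_{q\in B_i}|R_{p_i,q}|\leq\sqrt{|B_i|}$ for each undisplaced index) accumulates a factor $L^{dn/2}$ that destroys the estimate. Controlling this sum is exactly where the permanent cross-term analysis you disavow is needed, and your reason for disavowing it is mistaken: the paper's Lemma~\ref{lem_main} performs precisely the termwise expansion and shows that, after summing over outcomes and reorganizing by the relative permutation $\rho=\tau^{-1}\circ\sigma$, every index fixed by $\rho$ contributes a factor of exactly $1$ by unitarity ($\sum_j|R_{\mathsf{in}_i,j}|^2=1$), while every displaced index contributes at most $c=\eta L^{d-1}e^{(vt-L)/\xi}$, which is stretched-exponentially small in $n$ for every $\beta>1$. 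The $n!$ permutations are therefore tamed by $\varepsilon\leq\tfrac12[(1+c)^n-nc-1]=O(n^2c^2)$, with no breakdown as $\beta\to1^{+}$. (Your two-particle bunching bound is also asserted rather than proved --- the evolved Fock state is not Gaussian, so the pair correlator is not a Wick contraction of the one-body matrix --- but that is secondary; the essential missing ingredient is the unitarity-plus-fixed-point mechanism that bounds the interference terms after summing over outcomes.)
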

The intuition behind this Theorem is that when the time is smaller than the Lieb-Robinson timescale of particle interference $L/v$, the dynamics is approximately classical (in the sense that the particles are distinguishable).

An important technical achievement of this paper is to prove rigorously that this intuition is correct.
This is done in Lemma \ref{lem_main} by showing that the approximation works.
\begin{thm}[Hardness of simulation at longer times] \label{thm_hardness} (based on Theorem 3 of Ref.\ \cite{Aaronson2011})
When $t = \Omega(n^{1+\beta/d})$), the sampling problem is hard in general.
\end{thm}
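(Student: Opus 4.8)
The plan is to establish hardness by a reduction: I will show that any instance of Aaronson--Arkhipov (AA) boson sampling can be embedded into our nearest-neighbor dynamics run for a time $t=\Omega(n^{1+\beta/d})$, so that an efficient sampler for our problem would yield an efficient approximate boson sampler, contradicting AA's conjectures. The key observation that makes the reduction tractable is that AA's hardness does not require the full $m\times m$ Haar-random unitary, but only that the transfer matrix $R(t)$ restricted to the $n$ initially-occupied columns be (close to) a Haar-random isometry $W:\mathbb{C}^n\hookrightarrow\mathbb{C}^m$ into the $m=c_1 n^\beta$ modes. For $\beta\ge 5$ the relevant $n\times n$ submatrices appearing in Eq.\ (\ref{eq_probability}) are then approximately i.i.d.\ complex Gaussian, and Eq.\ (\ref{eq_probability}) reproduces the AA output distribution, so invoking Theorem~3 of Ref.\ \cite{Aaronson2011} yields hardness. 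The entire task thus reduces to realizing such a $W$ as the occupied-mode block of $R(t)$ using only nearest-neighbor hops with $|J_{ij}|\le 1$ within the allotted time.

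First I would fix a time-dependent $J(t)$ that executes a sequence of elementary linear-optical operations, for which two primitives suffice and both respect the constraints: (i) a beam-splitter/phase between two adjacent sites, generated by a single $a_i^\dagger a_j$ term in time $O(1)$ since $|J_{ij}|\le 1$; and (ii) routing, i.e.\ a chain of nearest-neighbor SWAPs that transports a mode across lattice distance $\ell$ in time $O(\ell)$. I would then decompose the target $W$ into two-mode operations via a Reck/Clements- or Householder-type construction. The crucial counting point is that, because only the $n$ occupied columns must be randomized rather than a full element of $U(m)$, the decomposition truncates to $O(n)$ global ``sweeps'' (e.g.\ $n$ Householder reflections, one per occupied mode), as opposed to the $O(m)$ depth needed for a generic $m$-mode unitary.

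The geometric heart of the argument is the routing cost on the $d$-dimensional lattice. Each sweep must fan a mode's amplitude across the whole array and back, and the array has diameter $\Theta((m+n)^{1/d})=\Theta(n^{\beta/d})$; using primitive (ii), and using the ancilla sites as intermediate ``bus'' locations to shorten and parallelize the transport paths (which is precisely why the ancillas are included), each sweep costs time $O(n^{\beta/d})$. Multiplying the $O(n)$ sweeps by the $O(n^{\beta/d})$ per-sweep cost gives total evolution time $O(n\cdot n^{\beta/d})=O(n^{1+\beta/d})$, so for any $t=\Omega(n^{1+\beta/d})$ there is enough time to build $W$ (and to idle if $t$ is larger). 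I would also verify that all intermediate couplings obey nearest-neighbor locality and $|J_{ij}|\le 1$, and that the ancillas, which begin in vacuum, are returned to vacuum at the end, so that measurement of the $m$ physical modes reproduces $\mathcal{D}_U$ exactly.

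The step I expect to be the main obstacle is controlling two approximations simultaneously within a $1/\mathrm{poly}(n)$ total-variation budget: (a) the finite gate sequence realizes $W$ only up to some synthesis error, making $R(t)$ inverse-polynomially close to the ideal AA isometry rather than exactly equal; and (b) AA's Gaussian approximation of the submatrices is itself only statistically close. I would argue that a $\mathrm{poly}(n)$-length gate sequence suffices to realize $W$ to inverse-polynomial accuracy, that total-variation distance does not increase under the classical post-processing that maps our samples to AA samples, and that these contributions compose additively so as to remain within the efficient-sampler tolerance $\epsilon=O(1/\mathrm{poly}(n))$. Together with AA's two conjectures (hardness of additively approximating $|\mathrm{Per}(G)|^2$ for Gaussian $G$, and non-collapse of the polynomial hierarchy), this rules out an efficient sampler at $t=\Omega(n^{1+\beta/d})$, which is the claimed hardness.
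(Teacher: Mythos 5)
Your proposal follows essentially the same route as the paper: decompose the target transformation into elementary nearest-neighbor beamsplitters and phases, exploit the fact that only the $n$ occupied columns need to be implemented (AA's ancilla-based column construction) so that $O(n)$ sweeps suffice, and charge each sweep the lattice-diameter routing cost $O(m^{1/d})=O(n^{\beta/d})$ to arrive at total time $O(n^{1+\beta/d})$, after which hardness follows by reduction to AA. Your additional bookkeeping of the synthesis and Gaussian-approximation errors is a reasonable elaboration of details the paper leaves implicit, but the argument is the same.
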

\end{subtheorem}
For this result, we show that we can apply any unitary $R$ after the stated time.
Therefore, if an efficient sampler exists for this problem, then we have an efficient boson sampler, which cannot exist by our assumption.

The result for the case of Anderson localization comes out as a corollary from Theorem \ref{thm_mainthm}:
\begin{corollary}[Easiness of Anderson-localized systems]
\label{thm_Anderson} For Anderson-localized systems in any dimension d, the sampling problem is easy for all times.
\end{corollary}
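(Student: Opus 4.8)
The plan is to observe that the proof of Theorem~\ref{thm_mainthm}, and in particular the approximation established in Lemma~\ref{lem_main}, uses the Lieb-Robinson bound only through one consequence: at the relevant time the single-particle propagator $R(t)$ is approximately block-diagonal with respect to the disjoint neighborhoods of the initial boson positions, so that the amplitude leaking from the vicinity of one boson into the vicinity of another is small. The Lieb-Robinson bound supplies this only up to the timescale $t \lesssim L/v$, because the light cone has radius $\sim vt$ and one needs $vt < L$ for the neighborhoods to stay separated. For an Anderson-localized system the same conclusion holds for \emph{all} times, and the corollary then follows by replacing the light-cone radius $vt$ with a time-independent localization length throughout the argument of Lemma~\ref{lem_main}.

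Concretely, I would first invoke the defining feature of the localized phase: the single-particle propagator obeys a dynamical-localization bound $|R_{ij}(t)| \le C e^{-|i-j|/\xi}$ for a constant localization length $\xi$ and constant $C$, uniformly in $t$ (for time-independent disorder this is the standard statement of dynamical localization, holding with high probability over typical disorder realizations). This is precisely a zero-velocity Lieb-Robinson bound, so it furnishes the hypothesis of Lemma~\ref{lem_main} with the replacement $vt \mapsto \xi$ and removes the restriction on $t$.

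Next I would propagate this through the error estimate. Since the minimum boson spacing satisfies $2L > c_1^{1/d} n^{(\beta-1)/d}$, the off-neighborhood amplitude between any two bosons is bounded by $C e^{-2L/\xi} = e^{-\Omega(n^{(\beta-1)/d})}$. Summing over the $O(n^2)$ boson pairs (equivalently, the corresponding entries of the matrix $A$ whose permanent enters Eq.~\eqref{eq_probability}), the total deviation of $\mathcal{D}_U$ from the distribution in which the bosons are treated as distinguishable—which is efficiently samplable exactly as in Theorem~\ref{thm_mainthm}—is at most $\mathrm{poly}(n)\, e^{-\Omega(n^{(\beta-1)/d})}$. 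For $\beta > 1$ this is superpolynomially small, hence certainly $O(1/\mathrm{poly}(n))$, and the efficient sampler for the distinguishable-particle distribution is therefore an efficient sampler for $\mathcal{D}_U$ at every $t$.

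The main obstacle I anticipate is not the large-distance exponential tail, which is harmless, but verifying that the uniform-in-time bound can be substituted into Lemma~\ref{lem_main} without the light-cone structure that the original Lieb-Robinson argument exploited, while keeping the accumulated error over all matrix entries controlled; in the localized case this works because the decay rate $1/\xi$ is a fixed constant whereas $L$ grows polynomially in $n$, so the product of $\mathrm{poly}(n)$ error terms is dominated by the exponential suppression. A secondary point to state carefully is that localization is a probabilistic statement over disorder, so the conclusion is that sampling is easy with high probability—equivalently, for any $R(t)$ satisfying the exponential-decay promise—which is the sense in which the promise on the input $R$ is to be read.
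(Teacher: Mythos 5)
Your proposal is correct and follows essentially the same route as the paper: the Anderson-localized promise $|R_{ij}|\leq e^{-\ell_{ij}/\xi}$ of Eq.~(\ref{eq_localised_eigv}) is treated as a zero-velocity Lieb-Robinson bound, so the hypothesis and error estimate of Lemma~\ref{lem_main} apply with $v=0$ (equivalently your $vt\mapsto\mathrm{const}$), the time restriction $t\leq 0.9L/v$ becomes vacuous, and the distinguishable-particle sampler of Theorem~\ref{thm_mainthm} works for all $t$ with error $e^{-\Omega(L/\xi)}$. Your closing caveat about localization being a promise on $R$ (holding with high probability over disorder) matches how the paper frames it.
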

The easiness of sampling for Anderson-localized systems is analogous to results showing efficient simulation of localized systems according to various definitions \cite{Friesdorf2015, Huang2015, Pollmann2016, Yu2017, Chapman2017}.

\textit{Easiness at short times.---} In this section, we prove Theorem \ref{thm_mainthm} and Corollary \ref{thm_Anderson}.
First, let us examine the promise we have on the unitary $R$ in both cases.
We use the Lieb-Robinson bound \cite{Lieb1972} on the speed of information propagation in a system.
Applying the bound to our Hamiltonian, we get
\begin{gather} \label{eq_LRbound}
 |[a_i(t), a_j^\dagger(0) ] | = |R_{ij}(t)| \leq \mathrm{min} \left(1, \exp \left( \frac{vt - \ell_{ij}}{\xi}\right) \right),
\end{gather}
where $\ell_{ij}$ is the distance between two sites $i$ and $j$, $v$ is the upper bound to the velocity of information propagation called the Lieb-Robinson velocity and $\xi$ is a length scale).
Note that the results from Ref.\ \cite{Eisert2009} do not apply since we have free bosons here and we work in the single-particle subspace.
The Lieb-Robinson velocity is at most $4(1+2de)$ \cite{Hastings2010} when $|J_{ij}| \leq 1$ and $\xi = 1$.

When the Hamiltonian is Anderson-localized, the unitary $R$ satisfies the following promise at all times \cite{Frohlich1985}:
\begin{gather} \label{eq_localised_eigv}
 |R_{ij}| \leq \exp \left( \frac{-\ell_{ij} }{\xi}\right).
\end{gather}
Here, $\xi$ is the maximum localization length among all eigenvectors.
Eq.\ (\ref{eq_localised_eigv}) can be viewed as a consequence of a Lieb-Robinson bound with zero velocity \cite{Hamza2012}.
On account of the zero-velocity Lieb-Robinson bound, all results for the time-dependent case can be ported to the Anderson localized case, setting $v = 0$.

We give an algorithm that efficiently samples from the output distribution for short times $t <  \frac{0.9 L}{v} = O(n^{(\beta -1)/d})$, given the promise in Eq.\ (\ref{eq_LRbound}).
The algorithm outputs a sample from a distribution $\mathcal{D}_{DP}$, obtained by assuming that the bosons are distinguishable particles, ignoring the effects of interference.
The algorithm is described in more detail in Ref.\ \cite{SM_complexity_OP}.

\textit{Analysis.---}
We prove the correctness of the algorithm by showing that the variation distance between the distributions is upper bounded by an inverse exponential in $n$.
When the bosons are distinguishable, their dynamics is given by a Markov process, described by the matrix $\mathcal{P}_{kl} = |R(t)|_{kl}^2$.
The probability of getting an outcome $s$ is given by
\begin{gather}
 \Pr_{\mathcal{D}_{DP}}[s] = \sum_\sigma \frac{1}{s!}
\mathcal{P}_{\mathsf{in}_1,\mathsf{out}_{\sigma(1)}}
\mathcal{P}_{\mathsf{in}_2,\mathsf{out}_{\sigma(2)}} \ldots
\mathcal{P}_{\mathsf{in}_n,\mathsf{out}_{\sigma(n)}},
\end{gather}
where the sum is over all permutations $\sigma$ mapping the input bosons to the output ones.
In the above equations, $\mathsf{in}_i$ is the site index of the $i$'th boson in nondecreasing order in the input and $\mathsf{out}$ is defined similarly at the output (see Ref.\ \cite{SM_complexity_OP} for an example).
We now state a result on how close $\mathcal{D}_{DP}$ is to the true distribution $\mathcal{D}_U$.

\begin{lemma} \label{lem_main}
When $\beta > 1$ and $t \leq 0.9L/v$, the variation distance satisfies $\|\mathcal{D}_{DP} - \mathcal{D}_U\| = O(\exp[2 \frac{vt-L}{\xi} + 2(d-1) \log L]).$
\end{lemma}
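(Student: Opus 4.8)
The plan is to compare the two distributions configuration by configuration using the permanent expansion in Eq.~(\ref{eq_probability}). Because the initial state has at most one boson per site, $r!=1$, and writing $|\mathrm{Per}(A)|^2$ as a double sum over permutations gives $\Pr_{\mathcal{D}_U}[s]=\frac{1}{s!}\sum_{\sigma,\tau}\prod_i R_{\mathsf{in}_i,\mathsf{out}_{\sigma(i)}}\overline{R_{\mathsf{in}_i,\mathsf{out}_{\tau(i)}}}$. The first observation is that the diagonal terms $\sigma=\tau$ collapse to $\frac{1}{s!}\sum_\sigma\prod_i|R_{\mathsf{in}_i,\mathsf{out}_{\sigma(i)}}|^2$, which is exactly $\Pr_{\mathcal{D}_{DP}}[s]$. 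Hence the pointwise difference $\Pr_{\mathcal{D}_U}[s]-\Pr_{\mathcal{D}_{DP}}[s]$ is carried entirely by the genuinely quantum, off-diagonal terms with $\sigma\neq\tau$.

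Next I would apply the triangle inequality to $\frac12\sum_s|\Pr_{\mathcal{D}_U}[s]-\Pr_{\mathcal{D}_{DP}}[s]|$, rewrite the sum over occupation configurations $s$ (with weight $1/s!$) as an unconstrained sum over ordered output assignments, and perform the output sum first. Since the summand factorizes across output slots, each pair $(\sigma,\tau)$ reduces to a product of single-slot overlap sums $S(a,b):=\sum_{o}|R_{\mathsf{in}_a,o}|\,|R_{\mathsf{in}_b,o}|$. Slots with $\sigma^{-1}=\tau^{-1}$ give the spectator factor $S(a,a)=\sum_o|R_{\mathsf{in}_a,o}|^2=1$ by unitarity of $R$, whereas the displaced slots produce the nontrivial factors $S(a,b)$ with $a\neq b$. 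The leading off-diagonal contribution comes from single transpositions of a pair of input bosons, each of which contributes a factor $S(a,b)^2$; I would argue separately that permutation pairs differing by longer cycles carry strictly more small factors and are therefore subdominant.

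The crux is the purely geometric estimate of $S(a,b)$ for $a\neq b$. Since the initial bosons are separated by at least $2L$, every site obeys $\ell_{ab}\leq \ell_{\mathsf{in}_a,o}+\ell_{\mathsf{in}_b,o}$, so $\ell_{\mathsf{in}_a,o}+\ell_{\mathsf{in}_b,o}\geq 2L$, and for $t\leq 0.9L/v$ the two light cones are disjoint, so no site lies within $vt$ of both inputs. Inserting this into the Lieb-Robinson bound Eq.~(\ref{eq_LRbound}) shows that every term of $S(a,b)$ is suppressed by at least $\exp[2(vt-L)/\xi]$, the maximum being attained near the midpoint, where each of the two amplitudes contributes $\exp[(vt-L)/\xi]$; physically, both bosons must each propagate a distance of order $L$ before they can meet and interfere. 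The number of sites contributing appreciably to the overlap grows only polynomially in $L$, of order $L^{d-1}$, so $S(a,b)=O(L^{d-1}\exp[2(vt-L)/\xi])$, while nonadjacent pairs, separated by $\geq 4L$, are exponentially more suppressed and do not affect the leading behavior.

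Assembling these pieces, each transposition contributes $S(a,b)^2=O(L^{2(d-1)}\exp[4(vt-L)/\xi])$; this is already at least as strong as claimed, and loosening $\exp[4(vt-L)/\xi]\leq\exp[2(vt-L)/\xi]$ (valid since $vt<L$) recovers the stated form $O(\exp[2(vt-L)/\xi+2(d-1)\log L])$, the polynomially many remaining pairs being absorbed once $\beta>1$ forces $L\sim n^{(\beta-1)/d}\to\infty$ and the exponential dominates. I expect the main obstacle to be the combinatorial and geometric bookkeeping rather than any single inequality: passing rigorously from the $\sum_s(1/s!)$ normalization to the factorized overlap sums, counting the lattice sites in the overlap region to pin down the polynomial prefactor, and verifying uniformly that higher-order permutation differences are negligible. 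Once this is in hand, the hypothesis $\beta>1$ guarantees inverse-exponential closeness of the two distributions throughout $t\leq 0.9L/v$.
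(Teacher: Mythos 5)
Your proposal follows essentially the same route as the paper's proof: isolate the off-diagonal $\sigma \neq \tau$ permanent terms (the diagonal reproduces $\mathcal{D}_{DP}$ exactly), convert $\sum_s 1/s!$ into a free sum over ordered output sites so that undisplaced slots contribute $1$ by unitarity while displaced slots contribute a Lieb--Robinson--suppressed overlap of order $L^{d-1}\mathrm{e}^{\Theta(vt-L)/\xi}$, and let transpositions dominate with longer cycles controlled combinatorially using $\beta>1$. The paper organizes the displaced factors as round trips $\mathsf{in}_i \to j \to \mathsf{in}_{\rho(i)}$ and proves the lattice-site counting you defer to ``bookkeeping'' via two auxiliary Riemann-sum lemmas, but the decomposition and estimates coincide, and any imprecision in your polynomial prefactor is harmlessly absorbed by the exponential slack from $vt \leq 0.9L$.
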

This Lemma makes intuitive sense: because of the Lieb-Robinson bound Eq.\ (\ref{eq_LRbound}) and the fact that the initially occupied sites are separated by a minimum distance $\Theta(L)$ from each other, it takes a time $t = \Theta (L/v)$ for the bosons to start interfering considerably.
Therefore, the classical and quantum distributions agree exponentially closely in $L$ when $t \leq 0.9L/v$.
For a proof of Lemma \ref{lem_main}, see Ref.\ \cite{SM_complexity_OP}.
Assuming this Lemma, we now show Theorem \ref{thm_mainthm}.

\begin{proof}[Proof of Theorem \ref{thm_mainthm}] \refstepcounter{proof} \label{pf_mainthm} Lemma \ref{lem_main} shows that the algorithm samples from a distribution with exponentially small error in $n$, since $L = \Theta (n^{(\beta- 1)/d})$.
To complete the proof of Theorem \ref{thm_mainthm}, we need to show that the runtime of the algorithm is polynomial in $n$.
This is true because the corresponding Markov process of $n$ distinguishable bosons walking on $m$ sites for one step is efficiently simulable: for each of the $n$ particles, we select one among the $m$ sites to walk to, based on the matrix elements of $\mathcal{P}$.
This takes time $O(n\cdot \mathrm{poly}(m))$ = $O(\mathrm{poly}(n))$ to simulate on a classical computer.
\end{proof}

\textit{Hardness at longer times.---}
If we allow the system to evolve for a longer amount of time, we can use the time-dependent control to effect any arbitrary unitary and implement any boson sampling instance in the system.
We can perform phase gates on a site $k$ by setting $J_{kk}$ to be nonzero for a particular time, with the hopping terms and other diagonal terms set to zero.  
We can apply a nontrivial two-site gate between adjacent sites, for example the balanced beamsplitter unitary on the sites 1 and 2, $U = \frac{1}{\sqrt{2}}
\begin{pmatrix}
1 & -1\\
1 & 1
\end{pmatrix}$, by setting $H = -i(a_1^\dagger a_2 - a_1 a_2^\dagger)$ for time $t = \frac{\pi}{4}$.
One can also apply arbitrary unitaries using arbitrary on-site control $J_{ii}(t)$ and fixed, time-independent nearest-neighbor hopping $J_{ij}(t) = 1$.

Using the constructions in Refs.\ \cite{Reck1994, Clements2016}, we can effect any arbitrary $m\times m$ unitary on the $m$ sites with $O(m^2)$ depth from a nontrivial beamsplitter and arbitrary single-site phase gates.
A construction of AA that employs ancillas to obtain the desired final state rather than applying the full unitary on all the sites can be used to achieve a depth $O(nm^{1/d})$.
Each of the $n$ columns of the unitary are implemented in time $O(m^{1/d})$, which corresponds to the timescale set by the Lieb-Robinson velocity and the distance between the furthest two sites in the system (see note \footnote{AA's construction applied each column of the unitary in $O(\log m)$-depth, whereas we can only apply it in $O(m^{1/d})$ depth because of the spatial locality of the Hamiltonian and the Lieb-Robinson bounds that follow.}).

\begin{proof}[Proof of Theorem \ref{thm_hardness}]
From the above, when $t = \Omega(n^{1 +\frac{\beta}{d}})$, we see that we can effect any arbitrary unitary.
This implies that an efficient sampler for this regime can also be used as an efficient boson sampler, which is widely believed not to exist because of AA's results \cite{Aaronson2011}.
\end{proof}

\begin{figure}
\includegraphics[width=\columnwidth]{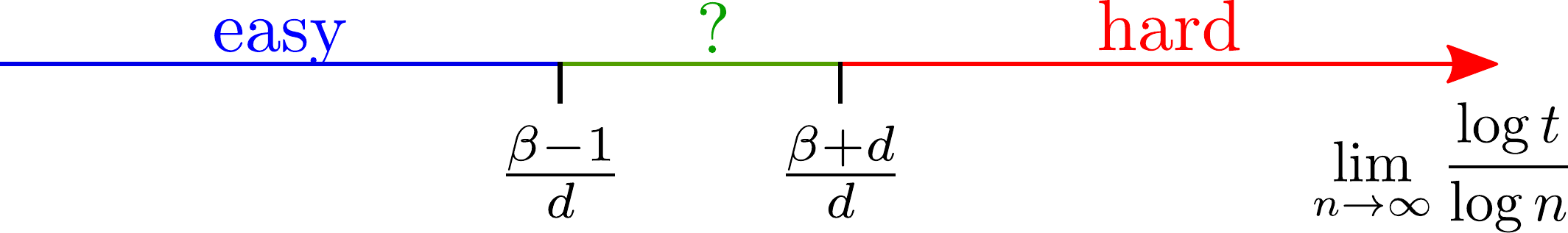}
\caption{Complexity phase diagram of free bosons, illustrating that sampling complexity can delineate boundaries of a physical system as a function of system parameters, including time.
Our results indicate that $\frac{\beta-1}{d} \leq c \leq \frac{\beta+d}{d}$, where $c$ is the transition point for the scaling exponent of $t$ with $n$.}
\label{fig_phasediagram}
\end{figure}

\textit{Outlook.---}
We have defined the sampling problem for local Hamiltonian dynamics and given upper and lower bounds for the scaling of time $t(n)$ with the number of bosons $n$ for which the problem is efficiently simulable or hard to classically simulate, respectively.
Our results are captured in Fig.~\ref{fig_phasediagram} that illustrates the complexity phase diagram of the system.
For time-independent systems, we observe that boson sampling is classically easy for all times if the system is Anderson-localized.
In a future work \cite{Deshpande}, we show that there is a class of static, local Hamiltonians that generate a hard-to-sample output distribution at some time that is not formally infinity.
This means that sampling complexity distinguishes Anderson-localized and delocalized systems, which makes it similar to an order parameter that distinguishes different phases.

The case with nonzero Lieb-Robinson velocity [Eq.\ (\ref{eq_LRbound})] shows two regimes of the scaling of $t$ with $n$ where sampling is provably easy/hard.
We have shown that sampling is easy when $t \leq t_\mathrm{easy} = \Theta(n^{\frac{\beta -1}{d}})$ and hard when $t \geq t_\mathrm{hard} = \Theta(n^{1 + \frac{\beta}{d}})$.
Since our definitions of easiness and hardness are exhaustive, we argue that there must exist a constant $c$ such that sampling is efficient for $t < \Theta(n^c)$ and hard otherwise, illustrating a dynamical phase transition.
Our proof implies that $c \in [\frac{\beta - 1}{d}, \frac{\beta+d}{d}]$ and we show in future work \cite{Deshpande} that the transition is sharp (see note \footnote{The transition here is sharp if there is a specific time $t_{\mathrm{samp}} = \Theta (n^c)$ such that sampling is easy for all $t < t_{\mathrm{samp}}$ and hard otherwise. The transition is coarse if sampling is easy for all $t = \Theta (n^c)$ and hard for all $t$ such that $\lim_{n\rightarrow \infty}\frac{t}{n^c} \rightarrow \infty $.}).
The transition is between two regimes, one for short times in which the system's dynamics is essentially indistinguishable from classical dynamics; and the other in which quantum mechanical effects dominate to such an extent as to forbid an efficient classical simulation.

This result may be viewed as a generalization of a similar result in Ref.\ \cite{Brod2015}, where it was shown that exact boson sampling for depth-4 circuits is hard.
The results there are not directly comparable to ours, since Ref.\ \cite{Brod2015} assumes $\beta = 1$, whereas our results need $\beta > 1$ (easiness) and $\beta  \geq 2\ \mathrm{or}\ 5$ (hardness).
The reason we get easiness even after polynomial time is that we deal with approximate sampling, a less stringent notion of simulation.
In addition, adapting the hardness proof of Ref.\ \cite{Brod2015} into our setup, we obtain that the system is hard to exactly sample from at timescale $\Omega(8L/v) = \Omega(n^\frac{\beta-1}{d})$ \cite{Deshpande}, partially answering what happens in the subregion with the question mark in Fig.\ \ref{fig_phasediagram}.

Recent studies \cite{Rozenbaum2017, Banerjee2017, He2017, Syzranov2017} have also studied transitions based on time-dependent features of the out-of-time-ordered correlator.
This raises the question of the connection between sampling complexity and scrambling time \cite{Hosur2016a,*Roberts2017} in quantum many-body systems \cite{Huang2016, *Fan2017, *Chen2016a, *Swingle2017} and fast scramblers like the Sachdev-Ye-Kitaev \cite{Sachdev1993, Kitaev2015, Maldacena2016, Maldacena2016a, Banerjee2017} models, and black holes \cite{Sekino2008, Lashkari2013, Swingle2016a, Brown2016, *Brown2016a}, where one can explore the connection to recent conjectures on complexity in the dual CFT \cite{Aaronson2016b, Brown2016, *Brown2016a, Brown2018}.
Further, it would be interesting to study sampling complexity in various other physical settings \cite{Marzolino2016}, like systems with topological or many-body-localized phases and systems in their ground state to explore the connection with Hamiltonian complexity.

We therefore propose another motivation for considering sampling complexity beyond the field of quantum computational supremacy: complexity provides a natural way to classify phases of matter that is complementary to traditional approaches based on symmetries and topology.
This is akin to how the study of entanglement in many-body physics has helped us understand phases of matter \cite{Amico2008, Zeng2015} and characterize thermalization and localization \cite{Bardarson2012, Yang2017}.

Coming to experimental implementations, platforms such as ultracold atoms in optical lattices \cite{Morsch2006, Bloch2012} and superconducting circuits \cite{Buluta2009, Devoret2013} are ideal for experimentally studying the transition by comparing the distribution sampled by the algorithm and the experimental distribution.
The ingredients required, which have been realized in several groups \cite{Bakr2009, Bakr2010, Sherson2010, Mazza2015, Miranda2015, Yamamoto2017}, are: 1.\ Preparation of the initial state \cite{Weitenberg2011, Wang2015} of the type shown in Fig.\ \ref{fig_lattice} (see note \footnote{All of our arguments go through even if the initial state is not regular as in Fig.\ \ref{fig_lattice} but still has the property that the initially occupied sites are separated by a minimum distance of $\Theta(L)$ from each other.}).
2.\ Evolution under a Hamiltonian with either arbitrary time-dependent nearest-neighbor hopping strength or fixed nearest-neighbor hopping strength $J_{ij}(t) = 1$ together with arbitrary time-dependent on-site potential $J_{ii}(t)$ \cite{Weitenberg2011, Fukuhara2013}, and
3.\ Single-site resolved measurement of occupation number of the sites \cite{Bakr2010, Sherson2010}.
Cold atoms in quantum gas microscopes can be controlled at the single site level, enabling all three ingredients above.
To maintain integrability, we can turn off the Hubbard interaction for the bosonic atoms by tuning to a Feshbach resonance \cite{Paul2016, Xu2003, Herbig2003a, Durr2004}.

New architectures like optical tweezers \cite{Kaufman2014} are also promising since they allow for deterministic creation of desired initial states and feature tunable interactions in time \cite{Endres2016, Bernien2017}.
Similarly, superconducting circuits have been proposed for quantum simulation of quantum walks \cite{Flurin2017} and the Bose-Hubbard model \cite{Hacohen-Gourgy2015, Deng2016, Roushan2017}.
The gmon qubit architecture, which was used in Ref.\ \cite{Roushan2017}, naturally allows for time-dependent variation of coupling strengths \cite{Chen2014}, and raises the prospect of an experimental study of the transition.

\begin{acknowledgments}
\textit{Acknowledgments.---}
We are grateful to Mohammad Hafezi, Victor Galitski, Zhe-Xuan Gong, Sergey Syzranov, Ignacio Cirac, Emmanuel Abbe and Elizabeth Crosson for discussions.
A.D., M.C.T., and A.V.G.\ acknowledge funding from ARL CDQI, ARO MURI, NSF QIS, ARO, NSF PFC at JQI, and AFOSR.
B.F.\  is  funded  in  part  by  AFOSR  YIP No.\ FA9550-18-1-0148  as  well  as  ARO  Grants No.\ W911NF-12-1-0541 and No.\ W911NF-17-1-0025, and NSF Grant No.\ CCF-1410022.

Recently, we learned of a related work \cite{Muraleedharan2018}, which shows an analogous result for logarithmic time without constraints on the initial state.
\end{acknowledgments}

\begin{widetext}
\renewcommand{\thesection}{S\arabic{section}} 
\renewcommand{\theequation}{S\arabic{equation}}
\renewcommand{\thefigure}{S\arabic{figure}}
\setcounter{equation}{0}
\setcounter{figure}{0}
\section{Supplemental material}
In this supplemental material, we give expressions for the output probabilities in the distribution $\mathcal{D}_U$ in a boson sampling experiment.
We then explicitly present the algorithm and derive the expression for $\mathcal{D}_{DP}$.
We then derive an upper bound to the variation distance between them, proving Lemma 3 of the main text.

\textit{Expression for output probabilities.---} In this section, we describe the standard boson sampling set-up and derive an expression for the output probabilities of a boson sampling experiment that define the distribution $\mathcal{D}_U$.
First, let us represent the input and output states pictorially and develop some notation.

\begin{figure}[h]
 \includegraphics[width=0.5\linewidth]{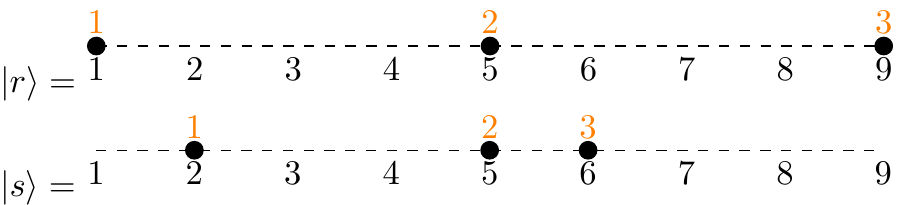}
 \caption{A representation of input and output basis states in 1-D.} \label{fig_SM}
\end{figure}

In Fig.\ \ref{fig_SM}, the top line denotes the input state $\ket{r}$ and the bottom line the output $\ket{s}$.
Each filled circle denotes a boson occupying the corresponding lattice site, which is labeled below the circles.
The numbers marked in orange above each boson label the bosons from left to right (more generally, this is in a nondecreasing order of the site index).
We will call this the boson index.

A given configuration (basis state) is completely specified by specifying the boson number in each site, such as $r = (1,0,0,0,1,0,0,0,1)$ and $s = (0,1,0,0,1,1,0,0,0)$ in the above.
It can also be specified by listing the site index for every boson index, i.e.\ the occupied sites.
Thus the input state can be represented as $\mathsf{in} = (1,5,9)$, the output state as $\mathsf{out} = (2, 5, 6)$.

All $n!$ permutations of the boson indices represent valid paths that the bosons can take to the output state, and correspond to the $n!$ terms in the permanent of the matrix.
In cases where there are two or more bosons in a particular site at the input or output, there are $\frac{n!}{r! s!}$ paths (and terms in the amplitude).
Here, $r! := r_1! r_2! \ldots r_m!$ and similarly $s!$.
By taking repeated rows and columns of $R$, this has the effect of still giving $n!$ terms in total, which we identify with the $n!$ permutations in the boson indices.
The expression for the probability of an outcome $s$ is (here, $b_i := a_i(t)$):

\begin{gather}
  \Pr_{\mathcal{D}_U}[s] = \frac{1}{r_1!r_2!\ldots r_m! s_1! s_2! \ldots s_m!} 
|\bra{\mathrm{vac}} 
  b_1^{s_1} b_2^{s_2} \ldots b_m^{s_m} a_1^{\dagger^{r_1}} a_2^{\dagger^{r_2}} \ldots 
a_m^{\dagger^{r_m}}\ket{\mathrm{vac}} |^2 \\
= \frac{1}{r! s!} |\bra{\mathrm{vac}} (\hat{U}^\dagger a_1^{s_1} \hat{U}) 
(\hat{U}^\dagger a_2^{s_2} \hat{U}) \ldots (\hat{U}^\dagger a_m^{s_m} \hat{U}) 
a_1^{\dagger^{r_1}} a_2^{\dagger^{r_2}} \ldots a_m^{\dagger^{r_m}} \ket{\mathrm{vac}}|^2 \\
= \frac{1}{r! s!}  |\bra{\mathrm{vac}} \left(\sum_{k_1=1}^{m} R^\dagger_{1k_1} a_{k_1} 
\right)^{s_1} \ldots \left(\sum_{k_m=1}^{m} R^\dagger_{mk_m} a_{k_m} \right)^{s_m} 
a_1^{\dagger^{r_1}} a_2^{\dagger^{r_2}} \ldots a_m^{\dagger^{r_m}} \ket{\mathrm{vac}}|^2,
\end{gather}
where $R_{ij}$ describes the action of $\hat{U}$ on the annihilation operators at a site: 
$b_i = a_i(t) = \sum_k R^\dagger_{ik}(t)a_k(0)$.
Now define the matrix $A^\dagger$ to be the one obtained by taking $s_i$ copies of the $i$'th row and $r_j$ copies of the $j$'th column of $R^\dagger$.
For concreteness, this can be done by first considering the rows and repeating a row $i$ of $R^\dagger$ whenever $s_i > 1$, or not picking it if $s_i=0$, to convert it into an $n\times m$ matrix.
We can then do the same with columns to convert it into an $n \times n$ matrix.
However, the order ultimately does not matter since the quantity that emerges, the permanent, is symmetric under exchange of rows or columns.
We have
\begin{gather}
 \Pr_{\mathcal{D}_U}[s] = \frac{1}{r! s!} \left| \sum_\sigma \prod_i 
R^\dagger_{\mathsf{out}_{\sigma(i)},\mathsf{in}_i} \right|^2 \label{eq_probR} \\ 
=  \frac{1}{r! s!}  \left| \sum_\sigma \prod_i A^\dagger_{\sigma(i),i} \right|^2, 
\label{eq_probA}
\end{gather}
where the sum is over all permutations $\sigma$.
This finally gives us 
\begin{gather}
 \Pr_{\mathcal{D}_U}[s] = \frac{1}{r!s!} 
\left|\mathrm{Per}(A^\dagger) \right|^2 = \frac{1}{r!s!}|\mathrm{Per}(A)|^2,
\end{gather}
where $\mathrm{Per}(A) $ is the \textit{permanent} of $A$.

\textit{Algorithm.---}
The sampling algorithm is given below.
It is easy to see that it implements one step of a Markov process of $n$ distinguishable bosons walking on a lattice.
\SetNlSkip{0.8em}
\SetInd{0.5em}{1em}

\begin{algorithm}[H]
 \caption{Sampling algorithm} \label{alg_main}
 \SetInd{1em}{1em}
\DontPrintSemicolon
 \KwIn{Unitary $R(t)$, tolerance $\epsilon$}
 \KwOut{Sample $s$ drawn from $\mathcal{D}_{DP}$, a distribution that is close to $\mathcal{D}_U$.}
\ $\mathcal{P}_{kl} = |R(t)|_{kl}^2$ \label{alg_line_Markovmat} \;
 \For{$i \in \{1,2,\ldots n\}$,}{
Select site $l$ from the distribution $\mathcal{P}_{\mathsf{in}_i,l}$ for the boson at $\mathsf{in_i}$ to hop to.\label{alg_line_distb}\;
Increment output boson number of site $l$ by 1: $s_l \rightarrow s_l + 1$ (or equivalently, assign $\mathsf{out}_i = l$)\;
}
\KwRet configuration $s$ (or $\mathsf{out}$), a sample from $\mathcal{D}_{DP}$.

\end{algorithm}

Note that $\mathcal{P}$ from line \ref{alg_line_Markovmat} is a doubly stochastic matrix describing the classical Markov process.
To see that the runtime is polynomial in $n$, note that the loop is over $n$ boson indices.
Line \ref{alg_line_distb} takes time $O(m \log m) = \tilde{O}(n^\beta)$, giving a total runtime of $\tilde{O}(nm) = \tilde{O}(n^{1+\beta})$.
The notation $\tilde{O}$ suppresses factors of $\log n$.

\textit{Bound on variation distance.---} Here we derive a bound on the variation distance  $\|\mathcal{D}_U - \mathcal{D}_{DP}\| = \frac{1}{2} \sum_s |\Pr_{\mathcal{D}_U}(s) - \Pr_{\mathcal{D}_{DP}}(s)|$.
Rewriting the actual probability in terms of the amplitudes, we have 
\begin{gather}
 \Pr_{\mathcal{D}_U}(s) = |\phi|^2, \ \ \mathrm{with} \\
 \phi^* = \frac{1}{\sqrt{r!s!}}\sum_\sigma 
R_{\mathsf{in}_1,\mathsf{out}_{\sigma(1)}}R_{\mathsf{in}_2, \mathsf{out}_{\sigma(2)}} \ldots 
R_{\mathsf{in}_n, \mathsf{out}_{\sigma(n)}} \\
= \frac{1}{\sqrt{s!}} \sum_\sigma A_{1,\sigma(1)} A_{2,\sigma(2)} \ldots A_{n,\sigma(n)} 
\label{eq_amplitude},
\end{gather}
where $A$ is the $n \times n$ matrix formed by taking the appropriate number of copies of each row and column of $R_{m \times m}$.
We have set $r! = 1$ since our input state has bosons in distinct sites.
Continuing,
\begin{gather}
 \Pr_{\mathcal{D}_U}(s) = \frac{1}{s!} \sum_\sigma |R_{\mathsf{in}_1,\mathsf{out}_{\sigma(1)}}|^2 
|R_{\mathsf{in}_2, 
\mathsf{out}_{\sigma(2)}}|^2 \ldots |R_{\mathsf{in}_n, \mathsf{out}_{\sigma(n)}}|^2 \ + \nonumber 
\\ 
 \frac{1}{s!} \sum_{\sigma \neq \tau} 
R_{\mathsf{in}_1,\mathsf{out}_{\sigma(1)}}R_{\mathsf{in}_2, \mathsf{out}_{\sigma(2)}} \ldots 
R_{\mathsf{in}_n, \mathsf{out}_{\sigma(n)}} 
(R_{\mathsf{in}_1,\mathsf{out}_{\tau(1)}}R_{\mathsf{in}_2, \mathsf{out}_{\tau(2)}} \ldots 
R_{\mathsf{in}_n, \mathsf{out}_{\tau(n)}})^*.
\label{eq_prob_actual}
 \end{gather}

The probability distribution $\mathcal{D}_{DP}$ that the algorithm samples from is given by the first line of Eq.\ (\ref{eq_prob_actual}):
\begin{gather}
 \Pr_{\mathcal{D}_{DP}}(s) =  \frac{1}{s!} \sum_\sigma 
\mathcal{P}_{\mathsf{in}_1,\mathsf{out}_{\sigma(1)}} 
\mathcal{P}_{\mathsf{in}_2,\mathsf{out}_{\sigma(2)}} \ldots 
\mathcal{P}_{\mathsf{in}_n,\mathsf{out}_{\sigma(n)}}, 
\end{gather}
where the sum is over all the $n!$ ways of assigning the $n$ input states to the $n$ output states.
As before, the $s!$ is to account for overcounting when two distinct permutations in the boson index refer to the same site index in the output state.

The expression for the probability is proportional to the permanent of the matrix with the positive entries $\mathcal{P}_{\mathsf{in}_i,\mathsf{out}_j}$, and can hence be efficiently approximated \cite{Jerrum2004}.
Note that the algorithm does not explicitly calculate this probability but only samples from the distribution.
We can now prove Lemma \ref{lem_main} of the main text.

\begin{proof}[Proof of Lemma \ref{lem_main}]
The variation distance is given by 
\begin{align}
 \varepsilon &= \sum_s \frac{1}{2s!} \left| \sum_{\sigma \neq \tau} R_{\mathsf{in}_1,\mathsf{out}_{\sigma(1)}} \ldots R_{\mathsf{in}_n, \mathsf{out}_{\sigma(n)}} (R_{\mathsf{in}_1,\mathsf{out}_{\tau(1)}} \ldots R_{\mathsf{in}_n, \mathsf{out}_{\tau(n)}})^* \right|\\
& \leq \sum_s \frac{1}{2} \sum_{\sigma \neq \tau} |R_{\mathsf{in}_1,\mathsf{out}_{\sigma(1)}} \ldots R_{\mathsf{in}_n,\mathsf{out}_{\sigma(n)}}| |R_{\mathsf{out}_{\tau(1)}, \mathsf{in}_1} \ldots R_{\mathsf{out}_{\tau(n)}, \mathsf{in}_n}| \\
& = \sum_s \frac{1}{2} \sum_{\sigma, \rho} |R_{\mathsf{in}_1,\mathsf{out}_{\sigma(1)}} R_{\mathsf{out}_{\sigma(1)},\mathsf{in}_{\rho(1)}}| \ldots |R_{\mathsf{in}_n,\mathsf{out}_{\sigma(n)}} R_{\mathsf{out}_{\sigma(n)}, \mathsf{in}_{\rho(n)}}|, \label{eq_vardistanceintermediate} 
\end{align}

where $\rho = \tau^{-1}\circ \sigma \neq \mathsf{Id}$, the identity permutation.
The last equality comes from rearranging the terms in $|R_{\mathsf{out}_{\tau(1)}, \mathsf{in}_1} \ldots R_{\mathsf{out}_{\tau(n)}, \mathsf{in}_n}|$ so that the terms involving $R_{\mathsf{in}_i,\mathsf{out}_{\sigma(i)}}$ and $R_{\mathsf{out}_{\sigma(i)},j}$ (for some $j$) are together:
\begin{align}
 \sum_\sigma \sum_\tau \prod_i |R_{\mathsf{out}_{\tau(i)},\mathsf{in}_i}| = \sum_\sigma \sum_\tau \prod_i |R_{\mathsf{out}_{i},\mathsf{in}_{\tau^{-1}(i)}}| \xrightarrow[i\rightarrow \sigma(i)]{\text{rearrange}}  \sum_\sigma \sum_\tau \prod_i |R_{\mathsf{out}_{\sigma(i)},\mathsf{in}_{\tau^{-1}(\sigma(i))}}|
\end{align}

Summing over all outcomes $s$ (or configurations $\mathsf{out}$), Eq.\ (\ref{eq_vardistanceintermediate}) is equivalent to 
\begin{align}
 \varepsilon \leq \frac{1}{2} \sum_j \sum_{\rho\neq \mathsf{Id}} \prod_{i} |R_{\mathsf{in}_i,j_i}||R_{j_i, \mathsf{in}_{\rho(i)}}|, \label{eq_basepointMinh}
\end{align}
where the sum $j$ is over ordered tuples $(j_1,\ldots j_n)$, representing the intermediate lattice sites that the bosons in positions $(\mathsf{in}_1,\ldots \mathsf{in}_n)$ jump to, before jumping back to positions $(\mathsf{in}_{\rho(1)},\ldots \mathsf{in}_{\rho(n)})$.

We can proceed to break the sum in Eq.\ (\ref{eq_basepointMinh}) based on the number of fixed points of the permutation $\rho$, that is, the number of indices $i$ such that $\rho(i) = i$.
We bound these quantities separately as follows:
\begin{align}
 & \sum_{\substack{j_i, \rho\\ \rho(i) \neq i }} |R_{\mathsf{in}_i,j_i}||R_{j_i,\mathsf{in}_{\rho(i)}}| = C_i \leq c\ \forall\ i \ \ \mathrm{and} \nonumber \\
& \sum_{\substack{j_i, \rho\\ 
 \rho(i) = i}} |R_{\mathsf{in}_i,j_i}||R_{j_i,\mathsf{in}_{\rho(i)}}| = \sum_{j_i} |R_{\mathsf{in}_i,j_i}|^2 = D_i = 1\ \forall\ i. \label{eq_bound_c}
\end{align}

The variation distance is therefore bounded above:
\begin{align}
 \varepsilon \leq \frac{1}{2} \sum \prod_{i\in \mathcal{I}_C} C_{i} \prod_{k\in \mathcal{I}_D} D_k, \label{eq_R_prime}
\end{align}
where the sum is over subsets $\mathcal{I}_C$ of the indices representing the input state, $\mathsf{in}$.
$\mathcal{I}_D$ is the complement of $\mathcal{I}_C$ and $|\mathcal{I}_D|$ is the number of fixed points.
Suppose we find an upper bound $c$ for $C_i$ in Eq.\ (\ref{eq_bound_c}), we then have
\begin{align}
 \varepsilon \leq \frac{1}{2} \sum_{l=|\mathcal{I}_C|=2}^{n} \binom{n}{l}c^l = (c+1)^n - nc - 1. \label{eq_bound_binomial}
\end{align}

In Lemma \ref{lem_boundc}, we show that $c = \eta L^{d-1} \mathrm{e}^{(vt-L)/\xi}$ for some constant $\eta$.
Continuing from Eq.\ (\ref{eq_bound_binomial}),
\begin{align}
\varepsilon & \leq \frac{1}{2} [(c+1)^n -nc - 1] \\
&= \binom{n}{2} (1+h)^{n-2}  c^2 \ \mathrm{for\ some\ } h \in [0,c] \ (\mathrm{by\ Taylor's\ theorem}) \\
\varepsilon & \leq \exp \left[2\log n + (n-2) \log (1 + c) + 2 \log c \right]
\end{align}
Now, plugging in the value of $c$ and assuming that $vt \leq 0.9L$ and $\beta > 1$, we get
\begin{align}
 \varepsilon & \leq O \left(\exp \left[(n-2)\times \eta L^{d-1} \mathrm{e}^{(vt-L)/\xi} +  2 \frac{vt-L}{\xi} + 2(d-1)\log L \right] \right) \\
& \leq O\left(\exp \left[2\frac{vt-L}{\xi} + 2(d-1)\log L \right]\right).
\end{align}
In the first line, we use the inequality $\log(1+x) \leq x$ and in the second, $\exp \left[(n-2)\times \eta L^{d-1} \mathrm{e}^{(vt-L)/\xi}\right] = O(1)$ since $vt< L$ and $|vt - L| = \Omega(n^{\beta - 1})$.
This completes the proof of Lemma \ref{lem_main}.
\end{proof}

\begin{lemma}\label{lem_boundc}
For all constant dimensions $d$, $c = \eta L^{d-1} \mathrm{e}^{(vt-L)/\xi}$.
\end{lemma}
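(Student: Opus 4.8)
The plan is to unpack the definition of $c$. From Eq.~\eqref{eq_bound_c}, $c$ is a uniform upper bound on
\[
C_i = \sum_{j} |R_{\mathsf{in}_i, j}|\,|R_{j, \mathsf{in}_{\rho(i)}}|
\]
over all index pairs with $\rho(i)\neq i$. Writing $p = \mathsf{in}_i$ and $q = \mathsf{in}_{\rho(i)}$, these are two \emph{distinct} initially occupied sites, so by the construction of the initial state their separation obeys $\ell_{pq} \geq 2L$. Thus the lemma reduces to bounding $\sum_j |R_{pj}|\,|R_{jq}|$ for any two sites at distance at least $2L$, using only the Lieb-Robinson promise Eq.~\eqref{eq_LRbound}.

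First I would split the lattice sites $j$ into two regions according to which endpoint is far: $\mathcal{P} = \{j : \ell_{jq} \geq \ell_{pq}/2\}$ and $\mathcal{Q} = \{j : \ell_{pj} > \ell_{pq}/2\}$. Since the triangle inequality gives $\ell_{pj} + \ell_{jq} \geq \ell_{pq}$, every site lies in at least one region. On $\mathcal{P}$ I bound the near factor trivially by $|R_{pj}| \leq 1$ and the far factor by the exponential $|R_{jq}| \leq \exp[(vt - \ell_{jq})/\xi]$, where crucially $\ell_{jq} \geq \ell_{pq}/2 \geq L$; the region $\mathcal{Q}$ is handled symmetrically. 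This yields
\[
C_i \leq 2\, e^{vt/\xi} \sum_{j:\ \ell_{jq} \geq L} e^{-\ell_{jq}/\xi}.
\]

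The core estimate is then to convert this site sum into a sum over distance shells. On a $d$-dimensional lattice the number of sites at distance in $[y, y+1)$ from a fixed point is $O(y^{d-1})$, so
\[
\sum_{j:\ \ell_{jq} \geq L} e^{-\ell_{jq}/\xi} \leq \kappa \sum_{y \geq L} y^{d-1}\, e^{-y/\xi}
\]
for a geometry-dependent constant $\kappa$. The tail sum is dominated by its first term: the ratio of successive summands is $(1+1/y)^{d-1} e^{-1/\xi}$, which is bounded below $1$ once $y$ exceeds a constant of order $(d-1)\xi$, so the series collapses onto a constant multiple of $L^{d-1} e^{-L/\xi}$. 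Since $L = \Theta(n^{(\beta-1)/d}) \to \infty$ while $d$ and $\xi$ are constants, this regime holds for all large $n$. Collecting factors gives $C_i = O(L^{d-1} e^{(vt-L)/\xi})$, hence $c = \eta L^{d-1} e^{(vt-L)/\xi}$ for a constant $\eta$ absorbing $\kappa$ and the factor $2$.

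I expect the main obstacle to be making the shell-counting estimate fully rigorous and uniform over all occupied pairs: one must justify that the polynomial shell growth $y^{d-1}$ is overwhelmed by the exponential decay $e^{-y/\xi}$ so that the tail collapses onto its $y = L$ value, and simultaneously handle the overlap of the two regions and the $\min(1,\cdot)$ truncation in Eq.~\eqref{eq_LRbound} without spoiling the clean $L^{d-1}$ prefactor.
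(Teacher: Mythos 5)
There is a genuine gap at the very first step: the quantity you bound is not the $C_i$ of Eq.~(\ref{eq_bound_c}). In the paper, $C_i$ sums over \emph{both} the intermediate site $j_i$ \emph{and} the value of $\rho(i)\neq i$, i.e.\ over all $n-1$ other initially occupied sites $k=\mathsf{in}_{\rho(i)}$ (this is how $C_i$ must be read for Eqs.~(\ref{eq_R_prime})--(\ref{eq_bound_binomial}) to follow, and it is how the paper's own proof writes it: $C_i=\sum_j\sum_{k\in\mathsf{in},\,k\neq i}|R_{ij}||R_{jk}|$). You instead fix a single target $q=\mathsf{in}_{\rho(i)}$ and bound only $\sum_j|R_{pj}||R_{jq}|$. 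That per-pair estimate is sound --- your region split $\mathcal{P}\cup\mathcal{Q}$ plus shell counting is essentially the paper's Lemma~\ref{lem_dimension} --- but it does not prove the lemma as stated: a naive union bound over the $n-1$ admissible targets multiplies your bound by $n-1$, which is not $c=\eta L^{d-1}e^{(vt-L)/\xi}$. The missing ingredient is precisely where the geometry of the initial state enters: the occupied sites form a sublattice of spacing $2L$, so the sum over targets $\sum_{k}e^{(vt-\ell_{kj})/\xi}$ converges to $O(e^{(vt-L)/\xi})$ (the paper's Lemma~\ref{lem_dimension_better}) rather than contributing a factor of $n$.

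The repair is available within your own framework, but it has to be carried out. Your per-pair bound actually decays with the separation, $\sum_j|R_{pj}||R_{jq}|=O\bigl(\ell_{pq}^{\,d-1}e^{(vt-\ell_{pq}/2)/\xi}\bigr)$, and summing this over occupied $q$ at distances $\ell_{pq}\approx 2L\|\vec{x}\|$ with $\|\vec{x}\|\geq 1$ recollapses onto the nearest neighbors and recovers $O(L^{d-1}e^{(vt-L)/\xi})$; none of this is in your write-up. Note also that once the $k$-sum is present, an intermediate site $j$ far from $p$ may be adjacent to some \emph{other} occupied site $k^*$, forcing the trivial bound $|R_{jk^*}|\leq 1$ on that factor so that all the smallness must come from $|R_{pj}|$; your two-region split handles this for a fixed pair, but it is exactly the term the paper isolates in the second half of Eq.~(\ref{eq_S_Ci_distances}), and a complete proof must account for it after the $k$-sum is restored.
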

\begin{proof}
Recall that
\begin{align}
C_i = \sum_{j_i}\sum_{\substack{\rho(i) \neq i}} 
\left|R_{\mathsf{in}_i,j_i}\right|\left|R_{j_i,\mathsf{in}_{\rho(i)}}\right|.
\end{align}
Since we are looking for a bound that applies for all $i$, let us, for convenience, make the following changes in notation: $\mathsf{in}_i \rightarrow i, j_i \rightarrow j, \mathsf{in}_{\rho(i)} \rightarrow k$, denoting a boson starting at position $i$, jumping to $j$ and then to $k$, where $i$ and $k \neq i$ are site indices belonging to $\mathsf{in}$.
We split the sum in $C_i$ based on the distance between $i$ and $j$, $\ell_{ij} =: \ell$.
\begin{align}
 C_i = \sum_{\substack{j\\ \ell \leq L}} \sum_{\substack{k \in \mathsf{in}\\k \neq i}} \left|R_{i,j}\right| \left|R_{j, 
k}\right| + \sum_{\substack{j\\ \ell > L}} \sum_{\substack{k \in \mathsf{in}\\k \neq i}}\left|R_{i,j}\right| \left|R_{j, k}\right|. \label{eq_S_Ci_distances}
\end{align}
Consider the first term:
\begin{align}
\sum_{\substack{j\\\ell \leq L}} 
\left|R_{i,j}\right|\sum_{\substack{k\in \mathsf{in}\\k\neq i}}\left|R_{j,k}\right| & \leq \sum_{\substack{j\\ \ell \leq L}} \left|R_{i,j} \right| \sum_{\substack{k\in \mathsf{in} \\k\neq i}}\mathrm{e}^{(vt-\ell_{kj})/\xi} \label{EQ_F1_Step2}\\
&\leq \sqrt{\left(\sum_{\substack{j\\ \ell \leq L}} 1^2\right)\left(\sum_{\substack{j\\ \ell \leq L}} \left|R_{i,j}\right|^2\right)} \mathrm{e}^{vt/\xi} \sum_{k:\|\vec{x}_k\|\geq 1} \mathrm{e}^{(-2L\|\vec{x}_k \| + L)/\xi} \label{EQ_F1_Step3}\\
&\leq a L^{d/2} e^{(vt +L)/\xi}\sum_{k:||\vec x_k||\geq 1} e^{-2L||\vec x_k||/\xi} \label{EQ_F1_Step4}\\
&\leq ab e^{(vt- L)/\xi} L^{d/2}.
\end{align}

Here in the first line, we have used the Lieb-Robinson bound Eq.\ (\ref{eq_LRbound}).
In the second line, we use the Cauchy-Schwarz inequality and the fact that $\ell = \ell_{ij} \leq L$.
In the second line, $\vec{x}_k $ is the position vector of site $k$ relative to site $i$, re-scaled by $2L$.
Therefore the sum over $\vec{x}_k $ is over all vectors with integer coordinates.
In the last line, we use Lemma \ref{lem_dimension_better}, to be proven below.
$a$ and $b$ are constants independent of $n$ that depend on the dimension $d$ and the length scale $\xi$.

Now, in the second term for $C_i$ in Eq.\ (\ref{eq_S_Ci_distances}), the intermediate site $j$ is not necessarily close to $i$.
Therefore, there are terms where $j$ is close to $k \neq i$ and one has to treat these terms carefully.
For these terms, we use the trivial Lieb-Robinson bound of 1 in Eq.\ (\ref{eq_LRbound}) rather than $\exp \left( \frac{vt - \ell_{jk}}{\xi}\right) > 1$.

\begin{align}
 \sum_{\substack{j\\ \ell > L}} \left|R_{i,j}\right|\sum_{\substack{k\in \mathsf{in} \\ k \neq i}} \left| R_{j,k}\right| & \leq \sum_{\substack{j \\ \ell > L}} \left|R_{i,j} \right| \sum_{\substack{k \in \mathsf{in} \\ k \neq i}} \mathrm{min}(1, \mathrm{e}^{(vt-\ell_{k,j})/\xi})\\
&\leq \sum_{\substack{j\\\ell > L}} \left|R_{i,j}\right|\left(1 + \mathrm{e}^{vt/\xi} \sum_{k:||\vec x_{k}||\geq 1} \mathrm{e}^{(L-2L\|\vec x_{k}\|)/\xi} \right)\\
& \leq \left(1 + b \mathrm{e}^{(vt- L)/\xi} \right) \sum_{\substack{j\\ \ell > L}} \left|R_{i,j}\right| \\
&\leq \left(1 + b \mathrm{e}^{(vt- L)/\xi}\right) \mathrm{e}^{vt/\xi} \sum_{\substack{j\\ \ell> L}} \mathrm{e}^{-\ell/\xi} \\
&\leq \left(1 + b \mathrm{e}^{(vt- L)/\xi} \right) \tilde{b} L^{d-1} \mathrm{e}^{(vt-L)/\xi}.
\end{align}
In the second line, we use 1 as a Lieb-Robinson bound for $|R_{j,k}|$ when $k = k^*$, the site belonging to $\mathsf{in}$ that is closest to $j$.
All other $k$'s have distances from $j$ bounded below by $2L\|\vec{x}_k\| - L$, where $\vec{x}_k$ is now the re-scaled position vector of a site $k$ with the origin at $k^*$.
We apply Lemma \ref{lem_dimension_better} in the third line and Lemma \ref{lem_dimension} in the fifth.
Collecting everything, we have
\begin{align}
 C_i  & \leq ab \mathrm{e}^{(vt-L)/\xi} L^{d/2} + \left(1 + b \mathrm{e}^{(vt- L)/\xi} \right) \tilde{b} L^{d-1} \mathrm{e}^{(vt-L)/\xi} \\
 & \leq \mathrm{e}^{(vt- L)/\xi} \eta L^{d-1} \ \mathrm{for\ large\ enough} \ L.
\end{align}
\end{proof}

\begin{lemma}[$d$-dimensional sum]\label{lem_dimension}
The sum $\sum_{\|\vec{x}\| \geq L} \mathrm{e}^{-\|\vec{x}\|/\xi}$ over points $\vec{x}$ with integer coordinates is upper bounded by $a_d \mathrm{e}^{- L/\xi} \left(\xi L^{d-1}\right)$ for large enough $\frac{L}{\xi}$ for some dimension-dependent constant $a_d$.
\end{lemma}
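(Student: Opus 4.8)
The plan is to dominate the discrete lattice sum by a continuous integral over $\mathbb{R}^d$ and then extract the leading large-radius asymptotic of the resulting radial integral. First I would pass from the sum over integer points to an integral by the standard cube-tiling trick: associate to each $\vec{x} \in \mathbb{Z}^d$ the unit cube centered at $\vec{x}$. These cubes are disjoint and tile $\mathbb{R}^d$, and every $\vec{y}$ in the cube about $\vec{x}$ obeys $\|\vec{y}\| \geq \|\vec{x}\| - \sqrt{d}/2$, so on that cube $\mathrm{e}^{-\|\vec{x}\|/\xi} \leq \mathrm{e}^{\sqrt{d}/(2\xi)}\,\mathrm{e}^{-\|\vec{y}\|/\xi}$. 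Moreover, if $\|\vec{x}\| \geq L$ then the entire cube lies in the region $\{\|\vec{y}\| \geq L'\}$ with $L' := L - \sqrt{d}/2$. Integrating over each cube (volume $1$) and summing yields
\[
\sum_{\|\vec{x}\|\geq L} \mathrm{e}^{-\|\vec{x}\|/\xi} \leq \mathrm{e}^{\sqrt{d}/(2\xi)} \int_{\|\vec{y}\|\geq L'} \mathrm{e}^{-\|\vec{y}\|/\xi}\, d\vec{y}.
\]

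Next I would evaluate the integral in spherical coordinates, reducing it to a one-dimensional radial integral,
\[
\int_{\|\vec{y}\|\geq L'} \mathrm{e}^{-\|\vec{y}\|/\xi}\, d\vec{y} = S_{d-1}\int_{L'}^{\infty} r^{d-1}\,\mathrm{e}^{-r/\xi}\, dr = S_{d-1}\,\xi^{d}\,\Gamma\!\left(d, L'/\xi\right),
\]
where $S_{d-1}$ is the surface area of the unit $(d-1)$-sphere and the substitution $u = r/\xi$ identifies the integral with the upper incomplete gamma function $\Gamma(d,\cdot)$.

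The heart of the argument is the large-argument behavior of $\Gamma(d, z)$ with $z = L'/\xi$. For integer $d$ there is the exact identity $\Gamma(d,z) = (d-1)!\,\mathrm{e}^{-z}\sum_{k=0}^{d-1} z^{k}/k!$ (see \cite{Olver}), whose leading term as $z \to \infty$ is $z^{d-1}\mathrm{e}^{-z}$. Thus, once $L'/\xi$ exceeds a dimension-dependent threshold, the subleading powers $z^{k}$ with $k < d-1$ are dominated, and $\Gamma(d, L'/\xi) \leq C_d\,(L'/\xi)^{d-1}\mathrm{e}^{-L'/\xi}$, so that $\xi^{d}\,\Gamma(d, L'/\xi) \leq C_d\,\xi\,(L')^{d-1}\mathrm{e}^{-L'/\xi}$. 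I expect this step to be the main obstacle: one must verify that the integral is controlled by its value near the lower endpoint and scales as $\xi\,(L')^{d-1}\mathrm{e}^{-L'/\xi}$ without accumulating extra powers of $L'/\xi$. This is routine once phrased as incomplete-gamma asymptotics, but is the only genuinely quantitative point.

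Finally I would collect constants. Since $L' \leq L$ gives $(L')^{d-1} \leq L^{d-1}$, and $\mathrm{e}^{-L'/\xi} = \mathrm{e}^{\sqrt{d}/(2\xi)}\,\mathrm{e}^{-L/\xi}$, the prefactors $S_{d-1}$, $C_d$, and the two factors $\mathrm{e}^{\sqrt{d}/(2\xi)}$ combine into a single dimension-dependent constant $a_d$, yielding
\[
\sum_{\|\vec{x}\|\geq L} \mathrm{e}^{-\|\vec{x}\|/\xi} \leq a_d\,\mathrm{e}^{-L/\xi}\left(\xi L^{d-1}\right)
\]
for large enough $L/\xi$, which is exactly the claimed bound.
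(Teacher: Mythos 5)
Your proposal is correct and follows essentially the same route as the paper's proof: both compare the lattice sum to the continuous integral $\int e^{-\|\vec{y}\|/\xi}\,d^d\vec{y}$ over unit cells (picking up the $e^{\sqrt{d}/(2\xi)}$ factor from $\bigl|\|\vec{y}\|-\|\vec{x}\|\bigr|\leq \sqrt{d}/2$), reduce it to the upper incomplete gamma function $\Gamma(d,\cdot)$, and invoke its large-argument behavior $z^{d-1}e^{-z}$. The only cosmetic differences are that you enlarge the domain to $\|\vec{y}\|\geq L-\sqrt{d}/2$ (slightly more careful than the paper's Riemann-sum phrasing) and use the exact finite-sum identity for integer $d$ instead of the DLMF asymptotic; neither changes the argument.
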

\begin{proof}
We can view the sum over a lattice of vectors with integer coordinates as a Riemann sum and bound the corresponding $d$-dimensional integral.
Consider the quantity
\begin{align}
g = \int_{\|\vec{x}\| \geq L} \mathrm{e}^{-\|\vec{x}\|/\xi} \mathrm{d}^d \vec{x} = \frac{2\pi^{d/2}}{\Gamma(\frac{d}{2})} \xi^d \Gamma \left(d, \frac{L}{\xi}\right), \label{eq_S_Riemann}
\end{align}
where $\Gamma(d,x) = \int_x^\infty y^{d-1} \mathrm{e}^{-y} \mathrm{d}y$ is the incomplete $\Gamma$ function.
We can lower bound the integral by the Riemann sum $\sum_\Delta V_\Delta \mathrm{e}^{-\|\vec{y}\|/\xi}$, where the sum is over cells $\Delta$ with volume $V_\Delta$ centered at lattice points $\vec{x}$.
$\vec{y}$ is the point in the cell $\Delta$ with the highest norm $\|\vec{y}\|$.
Further, the point with the highest norm is not too distant from the one at the center: $\|\vec{y}\| \leq \|\vec{x}\| + \frac{\sqrt{d}}{2}$.
Therefore, we have 
\begin{align}
 f := \sum_{\|\vec{x}\| \geq L} \mathrm{e}^{\|\vec{x}\|/\xi} \leq g \times \mathrm{e}^{\sqrt{d}/(2\xi)}. \label{eq_S_fbound}
\end{align}
We now need an upper bound on the incomplete $\Gamma$ function $\Gamma(d,x)$ for large $x$ \cite{Olver}:
\begin{align}
 \Gamma(d,x) \rightarrow x^{d-1} \mathrm{e}^{-x} \left(1 + O\left(\frac{1}{x}\right) \right) \mathrm{as}\ x \rightarrow \infty.
\end{align}
Combining Eq.\ (\ref{eq_S_Riemann}) and Eq.\ (\ref{eq_S_fbound}), we get:
\begin{align}
 f \leq O\left(\xi L^{d-1} \exp \left[ - \frac{L}{\xi}\right] \right).
\end{align}
\end{proof}
Using a similar method, we also get bounds on a related sum.
\begin{figure}
\includegraphics[width=0.4\linewidth]{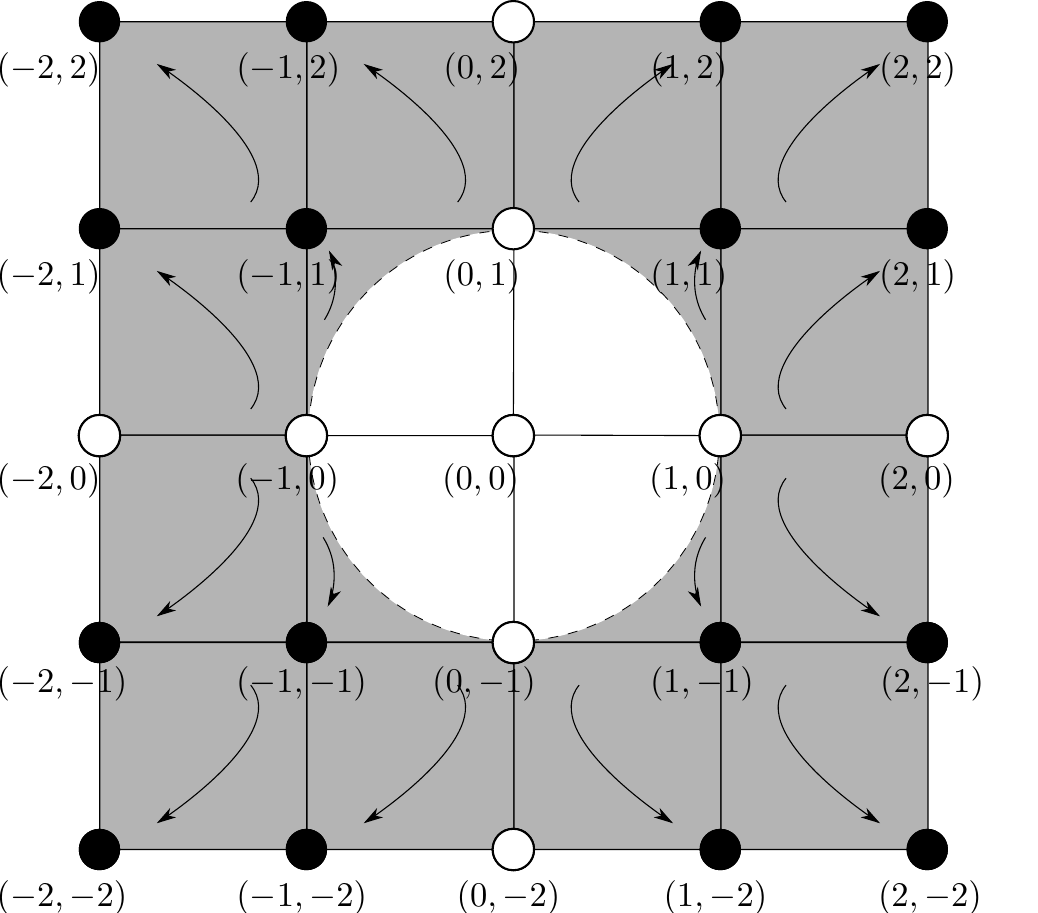}
\caption{Part of the lattice of vectors with integer coordinates.
The black dots are the points in the cell with the maximum norm $\|\vec{y}\|$ and the exponential is evaluated at these points.
The white ones do not enter the Riemann sum and are related to $f_{d-1}$, the corresponding quantity in one lower dimension.
The arrows show which point in the cell is picked to lower bound the Riemann sum.} \label{fig_SM_tiling}
\end{figure}
\begin{lemma}\label{lem_dimension_better}
 For $\vec{x} \in \mathbb{Z}^d$, $f_d := \sum_{\|\vec{x}\| \geq 1} \mathrm{e}^{-2L\|\vec{x}\|/\xi} \leq b_d \exp[{-\frac{2L}{\xi }}]$ for some dimension-dependent constant $b_d$.
\end{lemma}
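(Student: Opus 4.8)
The goal is to control the $d$-dimensional lattice sum $f_d = \sum_{\|\vec{x}\| \geq 1} \mathrm{e}^{-2L\|\vec{x}\|/\xi}$ over $\vec{x} \in \mathbb{Z}^d$ by a constant multiple of $\mathrm{e}^{-2L/\xi}$, uniformly in $L$ for large $L/\xi$. The slickest route is to observe that this sum is already of exactly the form handled by Lemma \ref{lem_dimension}, after rescaling the length parameters. The plan is to invoke Lemma \ref{lem_dimension} with its threshold set to $1$ and its length scale set to $\xi_\star := \xi/(2L)$, so that the summand becomes $\mathrm{e}^{-\|\vec{x}\|/\xi_\star} = \mathrm{e}^{-2L\|\vec{x}\|/\xi}$ and the constraint $\|\vec{x}\| \geq 1$ is preserved. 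The hypothesis ``large enough $L/\xi$'' of Lemma \ref{lem_dimension} translates into ``large enough $1/\xi_\star = 2L/\xi$'', which holds in our regime. This yields $f_d \leq a_d\, \mathrm{e}^{-2L/\xi}\, \xi_\star\, 1^{d-1} = a_d \frac{\xi}{2L}\, \mathrm{e}^{-2L/\xi}$, and since $\frac{\xi}{2L} \leq 1$ once $L \geq \xi/2$, we obtain the claim with $b_d = a_d$.

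If one prefers a self-contained argument that does not reuse Lemma \ref{lem_dimension} outside its original regime, the plan is to factor out the leading exponential and write $f_d = \mathrm{e}^{-2L/\xi} S_L$ with $S_L = \sum_{\|\vec{x}\| \geq 1} \mathrm{e}^{-2L(\|\vec{x}\|-1)/\xi}$, and then bound $S_L$ by a constant. The $2d$ minimal-norm points (the unit vectors, with $\|\vec{x}\|=1$) contribute exactly $2d$. Every remaining nonzero lattice point has $\|\vec{x}\| \geq \sqrt{2}$, where the elementary inequality $\|\vec{x}\| - 1 \geq (1 - 1/\sqrt{2})\|\vec{x}\|$ holds; hence the tail is dominated by $\sum_{\vec{x} \neq 0} \mathrm{e}^{-\kappa \|\vec{x}\|}$ with $\kappa = 2L(1-1/\sqrt{2})/\xi$, a convergent geometric-type sum that tends to $0$ as $L/\xi \to \infty$. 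Thus $S_L \leq 2d + 1$ for $L/\xi$ large enough, giving $b_d = 2d+1$. This decomposition into a dominant ``boundary shell'' of near-minimal-norm points plus a rapidly decaying bulk is precisely the splitting drawn in Fig.\ \ref{fig_SM_tiling}, where the bulk (black points) is compared to the integral and the boundary shell (white points) reduces to the analogous sum $f_{d-1}$ in one lower dimension, suggesting a clean induction on $d$ with a one-dimensional geometric series as the base case.

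The main subtlety to flag is \emph{why} one cannot simply transcribe the Riemann-sum-to-integral comparison used in the proof of Lemma \ref{lem_dimension}. There, the exponent carried coefficient $1/\xi$, so shifting a lattice point to a neighboring cell corner (a displacement of order $\sqrt{d}$) cost only the harmless constant factor $\mathrm{e}^{\sqrt{d}/(2\xi)}$. Here the exponent carries the large coefficient $2L/\xi$, and the same $O(\sqrt{d})$ shift would multiply the bound by $\mathrm{e}^{O(L)}$, which is exponentially large rather than $O(1)$ and would destroy the estimate. This is exactly the obstacle that the rescaling trick of the first approach circumvents, and that the boundary/bulk separation of the second approach confronts head-on by comparing to the integral only for points safely interior to the admissible region while absorbing the near-sphere shell into the lower-dimensional sum.
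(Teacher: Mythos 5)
Your first (``slickest'') route is unsound, and in fact its intermediate conclusion is false. Lemma \ref{lem_dimension} cannot be invoked with the rescaled parameters $(L,\xi)\to(1,\xi/(2L))$, because its constant $a_d$ is not uniform in the length scale: the Riemann-sum comparison in its proof costs a factor $\mathrm{e}^{\sqrt{d}/(2\xi)}$, which under your substitution becomes $\mathrm{e}^{\sqrt{d}\,L/\xi}$ --- precisely the $\mathrm{e}^{O(L)}$ blow-up you yourself flag in your last paragraph. The rescaling does not circumvent that obstacle; it merely hides it inside an implicit $\xi$-dependence of $a_d$. You can see directly that the bound you derive, $f_d \leq a_d\,\frac{\xi}{2L}\,\mathrm{e}^{-2L/\xi}$, cannot hold: the $2d$ unit vectors alone contribute $2d\,\mathrm{e}^{-2L/\xi}$ to $f_d$, which exceeds $a_d\,\frac{\xi}{2L}\,\mathrm{e}^{-2L/\xi}$ once $L$ is large. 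Equivalently, the statement of Lemma \ref{lem_dimension} is simply false in the parameter regime where you apply it.

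Your second, self-contained argument is correct and complete: writing $f_d=\mathrm{e}^{-2L/\xi}S_L$, the $2d$ norm-one points contribute exactly $2d$ to $S_L$; every other nonzero lattice point has $\|\vec{x}\|\geq\sqrt{2}$ and hence $\|\vec{x}\|-1\geq(1-1/\sqrt{2})\|\vec{x}\|$; and the resulting tail $\sum_{\vec{x}\neq 0}\mathrm{e}^{-\kappa\|\vec{x}\|}$ with $\kappa=2L(1-1/\sqrt{2})/\xi$ indeed tends to $0$ (e.g.\ by comparison with a product of one-dimensional geometric series via $\|\vec{x}\|_2\geq\|\vec{x}\|_1/\sqrt{d}$), giving $b_d=2d+1$ for $L/\xi$ large enough. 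This is a genuinely different and more elementary route than the paper's, which proceeds by induction on $d$: there, the lattice points with all coordinates nonzero are bounded by a Riemann sum evaluated at the farthest corner of each cell (so it genuinely lower-bounds the integral $g(d)$ with no error factor), yielding a contribution $O(\frac{\xi}{2L})\mathrm{e}^{-2L/\xi}$ via the incomplete $\Gamma$ function, while the points lying on coordinate hyperplanes are absorbed into $d\,f_{d-1}$ by the inductive hypothesis. Your shell-plus-tail decomposition isolates the same dominant contribution (the minimal-norm points, which in the paper's version survive the induction down to the $d=1$ geometric series) without the integral comparison or the induction, and it makes the constant explicit. Keep the second argument and drop the first.
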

\begin{proof}
We prove the statement by induction on the dimension $d$.
For $d = 1$, the statement is seen to be true since the sum evaluates exactly:
\begin{align}
 f_1 = \sum_{|x|\geq 1} \mathrm{e}^{-2L |x|/\xi} = 2 \sum_{x=1}^\infty \mathrm{e}^{-2L x/\xi} = \frac{2 \mathrm{e}^{-2L/\xi}}{1- \mathrm{e}^{-2L/\xi}} \leq 2.1 \times \mathrm{e}^{-2L/\xi}.
\end{align}
For the inductive step, consider the integral $g(d) = \int_{\|\vec{y}\| \geq 1}^{\infty} \mathrm{e}^{-2L\|\vec{y}\|/ \xi} \mathrm{d}^d\vec{y}$.
This is lower bounded by the Riemann sum represented in Fig.\ \ref{fig_SM_tiling}.
The white dots represent vectors with at least one zero coordinate and do not enter the Riemann sum according to this way of dividing the region of integration into cells.
In the following, the set of points with at least one zero coordinate is denoted $\O_{cc}$.
We have:
\begin{align}
  g(d) = \int_{\|\vec{y}\| \geq 1}^{\infty}  \mathrm{e}^{-2L\|\vec{y}\|/\xi} \mathrm{d}^d\vec{y} \geq \sum_{ \vec{x} \notin  \O_{cc}} \mathrm{e}^{-2L\|\vec{x}\|/\xi} \Delta_{\vec{x}},
\end{align}
where $\Delta_{\vec{x}}$ is the volume of the cell associated with the lattice vector $\vec{x}$.
In Fig.\ \ref{fig_SM_tiling}, the volume of most cells (whose center is at distance 1.5 or beyond from the origin) is 1.
The cells near the origin have some volume $\alpha_d < 1$ that depends on the dimension.
Lower bounding all volumes $\Delta_{\vec{x}}$ by $\alpha_d$,
\begin{align}
 \sum_{ \vec{x} \notin  \O_{cc}} \mathrm{e}^{-2L\|\vec{x}\|/\xi} & < \frac{g(d)}{\alpha_d} \\
 & = \frac{2\pi^{d/2}}{\alpha_d \Gamma(\frac{d}{2})} \left(\frac{\xi}{2L}\right)^d \Gamma \left(d, \frac{2L}{\xi}\right). \label{eq_S_blackcells}
\end{align}
Now it remains to upper bound the contribution from summing over the points $\O_{cc}$.
Notice that the sum over these points is upper bounded by the sum over $d$ hyperplanes of dimenson $d-1$.
From the inductive hypothesis,
\begin{align}
 \sum_{ \vec{x} \in  \O_{cc}} \mathrm{e}^{-2L\|\vec{x}\|/\xi} \leq df_{d-1} \label{eq_S_whitecells}
\end{align}
since there are $d$ hyperplanes of dimension $d-1$.
Adding Eqs.\ (\ref{eq_S_blackcells}) and (\ref{eq_S_whitecells}), we get
\begin{align}
 \sum_{\vec{x}} \mathrm{e}^{-2L\|\vec{x}\|/\xi} = f_d & < df_{d-1} + \frac{2\pi^{d/2}}{\alpha_d \Gamma(\frac{d}{2})} \left(\frac{\xi}{2L}\right)^d \Gamma \left(d, \frac{2L}{\xi}\right). \\
 & \leq db_{d-1} \exp \left[- \frac{2L}{\xi} \right] + \frac{2\pi^{d/2}}{\alpha_d \Gamma(\frac{d}{2})} \left(\frac{\xi}{2L}\right) \exp \left[ -\frac{2L}{\xi}\right] \\
 f_d & < b_d \exp \left[ -\frac{2L}{\xi}\right],
\end{align}
proving the lemma. In the second line we have expanded the incomplete $\Gamma$ function for large $L/\xi$.
\end{proof}
\end{widetext}


\begin{thebibliography}{95}%
\makeatletter
\providecommand \@ifxundefined [1]{%
 \@ifx{#1\undefined}
}%
\providecommand \@ifnum [1]{%
 \ifnum #1\expandafter \@firstoftwo
 \else \expandafter \@secondoftwo
 \fi
}%
\providecommand \@ifx [1]{%
 \ifx #1\expandafter \@firstoftwo
 \else \expandafter \@secondoftwo
 \fi
}%
\providecommand \natexlab [1]{#1}%
\providecommand \enquote  [1]{``#1''}%
\providecommand \bibnamefont  [1]{#1}%
\providecommand \bibfnamefont [1]{#1}%
\providecommand \citenamefont [1]{#1}%
\providecommand \href@noop [0]{\@secondoftwo}%
\providecommand \href [0]{\begingroup \@sanitize@url \@href}%
\providecommand \@href[1]{\@@startlink{#1}\@@href}%
\providecommand \@@href[1]{\endgroup#1\@@endlink}%
\providecommand \@sanitize@url [0]{\catcode `\\12\catcode `\$12\catcode
  `\&12\catcode `\#12\catcode `\^12\catcode `\_12\catcode `\%12\relax}%
\providecommand \@@startlink[1]{}%
\providecommand \@@endlink[0]{}%
\providecommand \url  [0]{\begingroup\@sanitize@url \@url }%
\providecommand \@url [1]{\endgroup\@href {#1}{\urlprefix }}%
\providecommand \urlprefix  [0]{URL }%
\providecommand \Eprint [0]{\href }%
\providecommand \doibase [0]{http://dx.doi.org/}%
\providecommand \selectlanguage [0]{\@gobble}%
\providecommand \bibinfo  [0]{\@secondoftwo}%
\providecommand \bibfield  [0]{\@secondoftwo}%
\providecommand \translation [1]{[#1]}%
\providecommand \BibitemOpen [0]{}%
\providecommand \bibitemStop [0]{}%
\providecommand \bibitemNoStop [0]{.\EOS\space}%
\providecommand \EOS [0]{\spacefactor3000\relax}%
\providecommand \BibitemShut  [1]{\csname bibitem#1\endcsname}%
\let\auto@bib@innerbib\@empty
\bibitem [{\citenamefont {Preskill}(2012)}]{Preskill2012a}%
  \BibitemOpen
  \bibfield  {author} {\bibinfo {author} {\bibfnamefont {J.}~\bibnamefont
  {Preskill}},\ }\href@noop {} {\  (\bibinfo {year} {2012})},\ \Eprint
  {http://arxiv.org/abs/1203.5813} {arXiv:1203.5813} \BibitemShut {NoStop}%
\bibitem [{\citenamefont {Terhal}\ and\ \citenamefont
  {DiVincenzo}(2002)}]{Terhal2002}%
  \BibitemOpen
  \bibfield  {author} {\bibinfo {author} {\bibfnamefont {B.~M.}\ \bibnamefont
  {Terhal}}\ and\ \bibinfo {author} {\bibfnamefont {D.~P.}\ \bibnamefont
  {DiVincenzo}},\ }\href {\doibase 10.1103/PhysRevA.65.032325} {\bibfield
  {journal} {\bibinfo  {journal} {Phys. Rev. A}\ }\textbf {\bibinfo {volume}
  {65}},\ \bibinfo {pages} {032325} (\bibinfo {year} {2002})}\BibitemShut
  {NoStop}%
\bibitem [{\citenamefont {Terhal}\ and\ \citenamefont
  {DiVincenzo}(2004)}]{Terhal2002a}%
  \BibitemOpen
  \bibfield  {author} {\bibinfo {author} {\bibfnamefont {B.~M.}\ \bibnamefont
  {Terhal}}\ and\ \bibinfo {author} {\bibfnamefont {D.~P.}\ \bibnamefont
  {DiVincenzo}},\ }\href {http://arxiv.org/abs/quant-ph/0205133} {\bibfield
  {journal} {\bibinfo  {journal} {Quantum Inf. Comput.}\ }\textbf {\bibinfo
  {volume} {4}},\ \bibinfo {pages} {134} (\bibinfo {year} {2004})}\BibitemShut
  {NoStop}%
\bibitem [{\citenamefont {Bremner}\ \emph {et~al.}(2010)\citenamefont
  {Bremner}, \citenamefont {Jozsa},\ and\ \citenamefont
  {Shepherd}}]{Bremner2010}%
  \BibitemOpen
  \bibfield  {author} {\bibinfo {author} {\bibfnamefont {M.~J.}\ \bibnamefont
  {Bremner}}, \bibinfo {author} {\bibfnamefont {R.}~\bibnamefont {Jozsa}}, \
  and\ \bibinfo {author} {\bibfnamefont {D.~J.}\ \bibnamefont {Shepherd}},\
  }\href {\doibase 10.1098/rspa.2010.0301} {\bibfield  {journal} {\bibinfo
  {journal} {Proc. R. Soc. Math. Phys. Eng. Sci.}\ }\textbf {\bibinfo {volume}
  {467}},\ \bibinfo {pages} {459} (\bibinfo {year} {2010})}\BibitemShut
  {NoStop}%
\bibitem [{\citenamefont {Aaronson}\ and\ \citenamefont
  {Arkhipov}(2011)}]{Aaronson2011}%
  \BibitemOpen
  \bibfield  {author} {\bibinfo {author} {\bibfnamefont {S.}~\bibnamefont
  {Aaronson}}\ and\ \bibinfo {author} {\bibfnamefont {A.}~\bibnamefont
  {Arkhipov}},\ }in\ \href {\doibase 10.1145/1993636.1993682} {\emph {\bibinfo
  {booktitle} {Proceedings of the Forty-Third Annual {{ACM}} Symposium on
  {{Theory}} of Computing}}}\ (\bibinfo  {publisher} {{ACM Press}},\ \bibinfo
  {address} {New York, New York, USA},\ \bibinfo {year} {2011})\ p.\ \bibinfo
  {pages} {333}\BibitemShut {NoStop}%
\bibitem [{\citenamefont {Jozsa}\ and\ \citenamefont {den
  Nest}(2013)}]{Jozsa2013}%
  \BibitemOpen
  \bibfield  {author} {\bibinfo {author} {\bibfnamefont {R.}~\bibnamefont
  {Jozsa}}\ and\ \bibinfo {author} {\bibfnamefont {M.~V.}\ \bibnamefont {den
  Nest}},\ }\href@noop {} {\  (\bibinfo {year} {2013})},\ \Eprint
  {http://arxiv.org/abs/1305.6190} {arXiv:1305.6190} \BibitemShut {NoStop}%
\bibitem [{\citenamefont {Fefferman}\ and\ \citenamefont
  {Umans}(2016)}]{Fefferman2016}%
  \BibitemOpen
  \bibfield  {author} {\bibinfo {author} {\bibfnamefont {B.}~\bibnamefont
  {Fefferman}}\ and\ \bibinfo {author} {\bibfnamefont {C.}~\bibnamefont
  {Umans}},\ }in\ \href {\doibase 10.4230/LIPIcs.TQC.2016.1} {\emph {\bibinfo
  {booktitle} {11th {{Conference}} on the {{Theory}} of {{Quantum
  Computation}}, {{Communication}} and {{Cryptography}} ({{TQC}} 2016)}}},\
  \bibinfo {series} {Leibniz International Proceedings in Informatics
  (LIPIcs)}, Vol.~\bibinfo {volume} {61}\ (\bibinfo {year} {2016})\ pp.\
  \bibinfo {pages} {1:1--1:19}\BibitemShut {NoStop}%
\bibitem [{\citenamefont {Bremner}\ \emph {et~al.}(2016)\citenamefont
  {Bremner}, \citenamefont {Montanaro},\ and\ \citenamefont
  {Shepherd}}]{Bremner2016}%
  \BibitemOpen
  \bibfield  {author} {\bibinfo {author} {\bibfnamefont {M.~J.}\ \bibnamefont
  {Bremner}}, \bibinfo {author} {\bibfnamefont {A.}~\bibnamefont {Montanaro}},
  \ and\ \bibinfo {author} {\bibfnamefont {D.~J.}\ \bibnamefont {Shepherd}},\
  }\href {\doibase 10.1103/PhysRevLett.117.080501} {\bibfield  {journal}
  {\bibinfo  {journal} {Phys. Rev. Lett.}\ }\textbf {\bibinfo {volume} {117}},\
  \bibinfo {pages} {080501} (\bibinfo {year} {2016})}\BibitemShut {NoStop}%
\bibitem [{\citenamefont {Farhi}\ and\ \citenamefont
  {Harrow}(2016)}]{Farhi2016a}%
  \BibitemOpen
  \bibfield  {author} {\bibinfo {author} {\bibfnamefont {E.}~\bibnamefont
  {Farhi}}\ and\ \bibinfo {author} {\bibfnamefont {A.~W.}\ \bibnamefont
  {Harrow}},\ }\href@noop {} {\  (\bibinfo {year} {2016})},\ \Eprint
  {http://arxiv.org/abs/1602.07674} {arXiv:1602.07674} \BibitemShut {NoStop}%
\bibitem [{\citenamefont {Aaronson}\ and\ \citenamefont
  {Brod}(2016)}]{Aaronson2016a}%
  \BibitemOpen
  \bibfield  {author} {\bibinfo {author} {\bibfnamefont {S.}~\bibnamefont
  {Aaronson}}\ and\ \bibinfo {author} {\bibfnamefont {D.~J.}\ \bibnamefont
  {Brod}},\ }\href {\doibase 10.1103/PhysRevA.93.012335} {\bibfield  {journal}
  {\bibinfo  {journal} {Phys. Rev. A}\ }\textbf {\bibinfo {volume} {93}},\
  \bibinfo {pages} {012335} (\bibinfo {year} {2016})}\BibitemShut {NoStop}%
\bibitem [{\citenamefont {Fefferman}\ \emph {et~al.}(2017)\citenamefont
  {Fefferman}, \citenamefont {Foss-Feig},\ and\ \citenamefont
  {Gorshkov}}]{Fefferman2017}%
  \BibitemOpen
  \bibfield  {author} {\bibinfo {author} {\bibfnamefont {B.}~\bibnamefont
  {Fefferman}}, \bibinfo {author} {\bibfnamefont {M.}~\bibnamefont
  {Foss-Feig}}, \ and\ \bibinfo {author} {\bibfnamefont {A.~V.}\ \bibnamefont
  {Gorshkov}},\ }\href {\doibase 10.1103/PhysRevA.96.032324} {\bibfield
  {journal} {\bibinfo  {journal} {Phys. Rev. A}\ }\textbf {\bibinfo {volume}
  {96}},\ \bibinfo {pages} {032324} (\bibinfo {year} {2017})}\BibitemShut
  {NoStop}%
\bibitem [{\citenamefont {Gao}\ \emph {et~al.}(2017)\citenamefont {Gao},
  \citenamefont {Wang},\ and\ \citenamefont {Duan}}]{Gao2017}%
  \BibitemOpen
  \bibfield  {author} {\bibinfo {author} {\bibfnamefont {X.}~\bibnamefont
  {Gao}}, \bibinfo {author} {\bibfnamefont {S.-T.}\ \bibnamefont {Wang}}, \
  and\ \bibinfo {author} {\bibfnamefont {L.-M.}\ \bibnamefont {Duan}},\ }\href
  {\doibase 10.1103/PhysRevLett.118.040502} {\bibfield  {journal} {\bibinfo
  {journal} {Phys. Rev. Lett.}\ }\textbf {\bibinfo {volume} {118}},\ \bibinfo
  {pages} {040502} (\bibinfo {year} {2017})}\BibitemShut {NoStop}%
\bibitem [{\citenamefont {Harrow}\ and\ \citenamefont
  {Montanaro}(2017)}]{Harrow2017}%
  \BibitemOpen
  \bibfield  {author} {\bibinfo {author} {\bibfnamefont {A.~W.}\ \bibnamefont
  {Harrow}}\ and\ \bibinfo {author} {\bibfnamefont {A.}~\bibnamefont
  {Montanaro}},\ }\href {\doibase 10.1038/nature23458} {\bibfield  {journal}
  {\bibinfo  {journal} {Nature}\ }\textbf {\bibinfo {volume} {549}},\ \bibinfo
  {pages} {203} (\bibinfo {year} {2017})}\BibitemShut {NoStop}%
\bibitem [{\citenamefont {Neville}\ \emph {et~al.}(2017)\citenamefont
  {Neville}, \citenamefont {Sparrow}, \citenamefont {Clifford}, \citenamefont
  {Johnston}, \citenamefont {Birchall}, \citenamefont {Montanaro},\ and\
  \citenamefont {Laing}}]{Neville2017}%
  \BibitemOpen
  \bibfield  {author} {\bibinfo {author} {\bibfnamefont {A.}~\bibnamefont
  {Neville}}, \bibinfo {author} {\bibfnamefont {C.}~\bibnamefont {Sparrow}},
  \bibinfo {author} {\bibfnamefont {R.}~\bibnamefont {Clifford}}, \bibinfo
  {author} {\bibfnamefont {E.}~\bibnamefont {Johnston}}, \bibinfo {author}
  {\bibfnamefont {P.~M.}\ \bibnamefont {Birchall}}, \bibinfo {author}
  {\bibfnamefont {A.}~\bibnamefont {Montanaro}}, \ and\ \bibinfo {author}
  {\bibfnamefont {A.}~\bibnamefont {Laing}},\ }\href {\doibase
  10.1038/nphys4270} {\bibfield  {journal} {\bibinfo  {journal} {Nat. Phys.}\
  }\textbf {\bibinfo {volume} {13}},\ \bibinfo {pages} {1153} (\bibinfo {year}
  {2017})}\BibitemShut {NoStop}%
\bibitem [{\citenamefont {Clifford}\ and\ \citenamefont
  {Clifford}(2018)}]{Clifford2018}%
  \BibitemOpen
  \bibfield  {author} {\bibinfo {author} {\bibfnamefont {P.}~\bibnamefont
  {Clifford}}\ and\ \bibinfo {author} {\bibfnamefont {R.}~\bibnamefont
  {Clifford}},\ }in\ \href {http://dl.acm.org/citation.cfm?id=3174304.3175276}
  {\emph {\bibinfo {booktitle} {Proceedings of the {{Twenty}}-{{Ninth Annual
  ACM}}-{{SIAM Symposium}} on {{Discrete Algorithms}}}}},\ \bibinfo {series and
  number} {SODA '18}\ (\bibinfo  {publisher} {{Society for Industrial and
  Applied Mathematics}},\ \bibinfo {address} {Philadelphia, PA, USA},\ \bibinfo
  {year} {2018})\ pp.\ \bibinfo {pages} {146--155}\BibitemShut {NoStop}%
\bibitem [{\citenamefont {Dalzell}\ \emph {et~al.}(2018)\citenamefont
  {Dalzell}, \citenamefont {Harrow}, \citenamefont {Koh},\ and\ \citenamefont
  {La~Placa}}]{Dalzell2018}%
  \BibitemOpen
  \bibfield  {author} {\bibinfo {author} {\bibfnamefont {A.~M.}\ \bibnamefont
  {Dalzell}}, \bibinfo {author} {\bibfnamefont {A.~W.}\ \bibnamefont {Harrow}},
  \bibinfo {author} {\bibfnamefont {D.~E.}\ \bibnamefont {Koh}}, \ and\
  \bibinfo {author} {\bibfnamefont {R.~L.}\ \bibnamefont {La~Placa}},\
  }\href@noop {} {\  (\bibinfo {year} {2018})},\ \Eprint
  {http://arxiv.org/abs/1805.05224} {arXiv:1805.05224} \BibitemShut {NoStop}%
\bibitem [{\citenamefont {Kirkpatrick}\ and\ \citenamefont
  {Selman}(1994)}]{Kirkpatrick1994}%
  \BibitemOpen
  \bibfield  {author} {\bibinfo {author} {\bibfnamefont {S.}~\bibnamefont
  {Kirkpatrick}}\ and\ \bibinfo {author} {\bibfnamefont {B.}~\bibnamefont
  {Selman}},\ }\href {\doibase 10.1126/science.264.5163.1297} {\bibfield
  {journal} {\bibinfo  {journal} {Science}\ }\textbf {\bibinfo {volume}
  {264}},\ \bibinfo {pages} {1297} (\bibinfo {year} {1994})}\BibitemShut
  {NoStop}%
\bibitem [{\citenamefont {Laumann}\ \emph {et~al.}(2010)\citenamefont
  {Laumann}, \citenamefont {Moessner}, \citenamefont {Scardicchio},\ and\
  \citenamefont {Sondhi}}]{Laumann2010}%
  \BibitemOpen
  \bibfield  {author} {\bibinfo {author} {\bibfnamefont {C.~R.}\ \bibnamefont
  {Laumann}}, \bibinfo {author} {\bibfnamefont {R.}~\bibnamefont {Moessner}},
  \bibinfo {author} {\bibfnamefont {A.}~\bibnamefont {Scardicchio}}, \ and\
  \bibinfo {author} {\bibfnamefont {S.~L.}\ \bibnamefont {Sondhi}},\ }\href
  {http://arxiv.org/abs/0903.1904} {\bibfield  {journal} {\bibinfo  {journal}
  {Quantum Inf. Comput.}\ }\textbf {\bibinfo {volume} {10}},\ \bibinfo {pages}
  {1} (\bibinfo {year} {2010})}\BibitemShut {NoStop}%
\bibitem [{\citenamefont {Seshadreesan}\ \emph {et~al.}(2015)\citenamefont
  {Seshadreesan}, \citenamefont {Olson}, \citenamefont {Motes}, \citenamefont
  {Rohde},\ and\ \citenamefont {Dowling}}]{Seshadreesan2015}%
  \BibitemOpen
  \bibfield  {author} {\bibinfo {author} {\bibfnamefont {K.~P.}\ \bibnamefont
  {Seshadreesan}}, \bibinfo {author} {\bibfnamefont {J.~P.}\ \bibnamefont
  {Olson}}, \bibinfo {author} {\bibfnamefont {K.~R.}\ \bibnamefont {Motes}},
  \bibinfo {author} {\bibfnamefont {P.~P.}\ \bibnamefont {Rohde}}, \ and\
  \bibinfo {author} {\bibfnamefont {J.~P.}\ \bibnamefont {Dowling}},\ }\href
  {\doibase 10.1103/PhysRevA.91.022334} {\bibfield  {journal} {\bibinfo
  {journal} {Phys. Rev. A}\ }\textbf {\bibinfo {volume} {91}},\ \bibinfo
  {pages} {022334} (\bibinfo {year} {2015})}\BibitemShut {NoStop}%
\bibitem [{\citenamefont {Heyl}\ \emph {et~al.}(2013)\citenamefont {Heyl},
  \citenamefont {Polkovnikov},\ and\ \citenamefont {Kehrein}}]{Heyl2013}%
  \BibitemOpen
  \bibfield  {author} {\bibinfo {author} {\bibfnamefont {M.}~\bibnamefont
  {Heyl}}, \bibinfo {author} {\bibfnamefont {A.}~\bibnamefont {Polkovnikov}}, \
  and\ \bibinfo {author} {\bibfnamefont {S.}~\bibnamefont {Kehrein}},\ }\href
  {\doibase 10.1103/PhysRevLett.110.135704} {\bibfield  {journal} {\bibinfo
  {journal} {Phys. Rev. Lett.}\ }\textbf {\bibinfo {volume} {110}},\ \bibinfo
  {pages} {135704} (\bibinfo {year} {2013})}\BibitemShut {NoStop}%
\bibitem [{\citenamefont {Heyl}(2017)}]{Heyl2017}%
  \BibitemOpen
  \bibfield  {author} {\bibinfo {author} {\bibfnamefont {M.}~\bibnamefont
  {Heyl}},\ }\href@noop {} {\  (\bibinfo {year} {2017})},\ \Eprint
  {http://arxiv.org/abs/1709.07461} {arXiv:1709.07461} \BibitemShut {NoStop}%
\bibitem [{\citenamefont {Rozenbaum}\ \emph {et~al.}(2017)\citenamefont
  {Rozenbaum}, \citenamefont {Ganeshan},\ and\ \citenamefont
  {Galitski}}]{Rozenbaum2017}%
  \BibitemOpen
  \bibfield  {author} {\bibinfo {author} {\bibfnamefont {E.~B.}\ \bibnamefont
  {Rozenbaum}}, \bibinfo {author} {\bibfnamefont {S.}~\bibnamefont {Ganeshan}},
  \ and\ \bibinfo {author} {\bibfnamefont {V.}~\bibnamefont {Galitski}},\
  }\href {\doibase 10.1103/PhysRevLett.118.086801} {\bibfield  {journal}
  {\bibinfo  {journal} {Phys. Rev. Lett.}\ }\textbf {\bibinfo {volume} {118}},\
  \bibinfo {pages} {086801} (\bibinfo {year} {2017})}\BibitemShut {NoStop}%
\bibitem [{\citenamefont {Lieb}\ and\ \citenamefont
  {Robinson}(1972)}]{Lieb1972}%
  \BibitemOpen
  \bibfield  {author} {\bibinfo {author} {\bibfnamefont {E.~H.}\ \bibnamefont
  {Lieb}}\ and\ \bibinfo {author} {\bibfnamefont {D.~W.}\ \bibnamefont
  {Robinson}},\ }\href {\doibase 10.1007/BF01645779} {\bibfield  {journal}
  {\bibinfo  {journal} {Commun. Math. Phys.}\ }\textbf {\bibinfo {volume}
  {28}},\ \bibinfo {pages} {251} (\bibinfo {year} {1972})}\BibitemShut
  {NoStop}%
\bibitem [{\citenamefont {Aaronson}(2011)}]{Aaronson2011a}%
  \BibitemOpen
  \bibfield  {author} {\bibinfo {author} {\bibfnamefont {S.}~\bibnamefont
  {Aaronson}},\ }\href {\doibase 10.1098/rspa.2011.0232} {\bibfield  {journal}
  {\bibinfo  {journal} {Proc. R. Soc. Math. Phys. Eng. Sci.}\ }\textbf
  {\bibinfo {volume} {467}},\ \bibinfo {pages} {3393} (\bibinfo {year}
  {2011})}\BibitemShut {NoStop}%
\bibitem [{SM_()}]{SM_complexity_OP}%
  \BibitemOpen
  \href@noop {} {}\bibinfo {note} {See the Supplemental Material for a proof of
  the main lemma and details on the distributions $\mathcal{D}_U$ and
  $\mathcal{D}_{DP}$, which includes Refs.\ \cite{Jerrum2004, Olver}.}\BibitemShut {Stop}%
\bibitem [{\citenamefont {Jerrum}\ \emph {et~al.}(2004)\citenamefont {Jerrum},
  \citenamefont {Sinclair},\ and\ \citenamefont {Vigoda}}]{Jerrum2004}%
  \BibitemOpen
  \bibfield  {author} {\bibinfo {author} {\bibfnamefont {M.}~\bibnamefont
  {Jerrum}}, \bibinfo {author} {\bibfnamefont {A.}~\bibnamefont {Sinclair}}, \
  and\ \bibinfo {author} {\bibfnamefont {E.}~\bibnamefont {Vigoda}},\ }\href
  {\doibase 10.1145/1008731.1008738} {\bibfield  {journal} {\bibinfo  {journal}
  {J. ACM}\ }\textbf {\bibinfo {volume} {51}},\ \bibinfo {pages} {671}
  (\bibinfo {year} {2004})}\BibitemShut {NoStop}%
\bibitem [{\citenamefont {Olver}\ \emph {et~al.}()\citenamefont {Olver},
  \citenamefont {Olde~Daalhuis}, \citenamefont {Lozier}, \citenamefont
  {Schneider}, \citenamefont {Boisvert}, \citenamefont {Clark}, \citenamefont
  {Miller},\ and\ \citenamefont {Saunders}}]{Olver}%
  \BibitemOpen
  \bibfield  {author} {\bibinfo {author} {\bibfnamefont {F.~W.~J.}\
  \bibnamefont {Olver}}, \bibinfo {author} {\bibfnamefont {A.~B.}\ \bibnamefont
  {Olde~Daalhuis}}, \bibinfo {author} {\bibfnamefont {D.~W.}\ \bibnamefont
  {Lozier}}, \bibinfo {author} {\bibfnamefont {B.~I.}\ \bibnamefont
  {Schneider}}, \bibinfo {author} {\bibfnamefont {R.~F.}\ \bibnamefont
  {Boisvert}}, \bibinfo {author} {\bibfnamefont {C.~W.}\ \bibnamefont {Clark}},
  \bibinfo {author} {\bibfnamefont {B.~R.}\ \bibnamefont {Miller}}, \ and\
  \bibinfo {author} {\bibfnamefont {B.~V.}\ \bibnamefont {Saunders}},\ }\href
  {http://dlmf.nist.gov/8.11\#i} {\enquote {\bibinfo {title} {{{NIST Digital
  Library}} of {{Mathematical Functions}}},}\ }\BibitemShut {NoStop}%
\bibitem [{Note1()}]{Note1}%
  \BibitemOpen
  \bibinfo {note} {Our easiness results also apply to a stronger notion of approximate sampling, namely that the algorithm samples from an $\epsilon$-close distribution in runtime $\mathrm{poly}(n,1/\epsilon)$.}\BibitemShut
  {Stop}%
\bibitem [{Note2()}]{Note2}%
  \BibitemOpen
  \bibinfo {note} {1. $f(n)=O(g(n))$ if $\forall n \geq n_0, f(n) \leq cg(n)$.
  \\ 2. $f(n) = \Omega (g(n)) \Leftrightarrow g(n) = O(f(n))$. \\ 3. $f(n) =
  \Theta (g(n)) $ if $f(n) = \Omega (g(n))$ and $f(n) = O(g(n))$.}\BibitemShut
  {Stop}%
\bibitem [{\citenamefont {Anderson}(1958)}]{Anderson1958}%
  \BibitemOpen
  \bibfield  {author} {\bibinfo {author} {\bibfnamefont {P.~W.}\ \bibnamefont
  {Anderson}},\ }\href {\doibase 10.1103/PhysRev.109.1492} {\bibfield
  {journal} {\bibinfo  {journal} {Phys. Rev.}\ }\textbf {\bibinfo {volume}
  {109}},\ \bibinfo {pages} {1492} (\bibinfo {year} {1958})}\BibitemShut
  {NoStop}%
\bibitem [{\citenamefont {Friesdorf}\ \emph {et~al.}(2015)\citenamefont
  {Friesdorf}, \citenamefont {Werner}, \citenamefont {Brown}, \citenamefont
  {Scholz},\ and\ \citenamefont {Eisert}}]{Friesdorf2015}%
  \BibitemOpen
  \bibfield  {author} {\bibinfo {author} {\bibfnamefont {M.}~\bibnamefont
  {Friesdorf}}, \bibinfo {author} {\bibfnamefont {A.~H.}\ \bibnamefont
  {Werner}}, \bibinfo {author} {\bibfnamefont {W.}~\bibnamefont {Brown}},
  \bibinfo {author} {\bibfnamefont {V.~B.}\ \bibnamefont {Scholz}}, \ and\
  \bibinfo {author} {\bibfnamefont {J.}~\bibnamefont {Eisert}},\ }\href
  {\doibase 10.1103/PhysRevLett.114.170505} {\bibfield  {journal} {\bibinfo
  {journal} {Phys. Rev. Lett.}\ }\textbf {\bibinfo {volume} {114}},\ \bibinfo
  {pages} {170505} (\bibinfo {year} {2015})}\BibitemShut {NoStop}%
\bibitem [{\citenamefont {Huang}(2015)}]{Huang2015}%
  \BibitemOpen
  \bibfield  {author} {\bibinfo {author} {\bibfnamefont {Y.}~\bibnamefont
  {Huang}},\ }\href@noop {} {\  (\bibinfo {year} {2015})},\ \Eprint
  {http://arxiv.org/abs/1508.04756} {arXiv:1508.04756} \BibitemShut {NoStop}%
\bibitem [{\citenamefont {Pollmann}\ \emph {et~al.}(2016)\citenamefont
  {Pollmann}, \citenamefont {Khemani}, \citenamefont {Cirac},\ and\
  \citenamefont {Sondhi}}]{Pollmann2016}%
  \BibitemOpen
  \bibfield  {author} {\bibinfo {author} {\bibfnamefont {F.}~\bibnamefont
  {Pollmann}}, \bibinfo {author} {\bibfnamefont {V.}~\bibnamefont {Khemani}},
  \bibinfo {author} {\bibfnamefont {J.~I.}\ \bibnamefont {Cirac}}, \ and\
  \bibinfo {author} {\bibfnamefont {S.~L.}\ \bibnamefont {Sondhi}},\ }\href
  {\doibase 10.1103/PhysRevB.94.041116} {\bibfield  {journal} {\bibinfo
  {journal} {Phys. Rev. B}\ }\textbf {\bibinfo {volume} {94}},\ \bibinfo
  {pages} {041116} (\bibinfo {year} {2016})}\BibitemShut {NoStop}%
\bibitem [{\citenamefont {Yu}\ \emph {et~al.}(2017)\citenamefont {Yu},
  \citenamefont {Pekker},\ and\ \citenamefont {Clark}}]{Yu2017}%
  \BibitemOpen
  \bibfield  {author} {\bibinfo {author} {\bibfnamefont {X.}~\bibnamefont
  {Yu}}, \bibinfo {author} {\bibfnamefont {D.}~\bibnamefont {Pekker}}, \ and\
  \bibinfo {author} {\bibfnamefont {B.~K.}\ \bibnamefont {Clark}},\ }\href
  {\doibase 10.1103/PhysRevLett.118.017201} {\bibfield  {journal} {\bibinfo
  {journal} {Phys. Rev. Lett.}\ }\textbf {\bibinfo {volume} {118}},\ \bibinfo
  {pages} {017201} (\bibinfo {year} {2017})}\BibitemShut {NoStop}%
\bibitem [{\citenamefont {Chapman}\ and\ \citenamefont
  {Miyake}(2017)}]{Chapman2017}%
  \BibitemOpen
  \bibfield  {author} {\bibinfo {author} {\bibfnamefont {A.}~\bibnamefont
  {Chapman}}\ and\ \bibinfo {author} {\bibfnamefont {A.}~\bibnamefont
  {Miyake}},\ }\href@noop {} {\  (\bibinfo {year} {2017})},\ \Eprint
  {http://arxiv.org/abs/1704.04405} {arXiv:1704.04405} \BibitemShut {NoStop}%
\bibitem [{\citenamefont {Eisert}\ and\ \citenamefont
  {Gross}(2009)}]{Eisert2009}%
  \BibitemOpen
  \bibfield  {author} {\bibinfo {author} {\bibfnamefont {J.}~\bibnamefont
  {Eisert}}\ and\ \bibinfo {author} {\bibfnamefont {D.}~\bibnamefont {Gross}},\
  }\href {\doibase 10.1103/PhysRevLett.102.240501} {\bibfield  {journal}
  {\bibinfo  {journal} {Phys. Rev. Lett.}\ }\textbf {\bibinfo {volume} {102}},\
  \bibinfo {pages} {240501} (\bibinfo {year} {2009})}\BibitemShut {NoStop}%
\bibitem [{\citenamefont {Hastings}(2010)}]{Hastings2010}%
  \BibitemOpen
  \bibfield  {author} {\bibinfo {author} {\bibfnamefont {M.~B.}\ \bibnamefont
  {Hastings}},\ }\href@noop {} {\  (\bibinfo {year} {2010})},\ \Eprint
  {http://arxiv.org/abs/1008.5137} {arXiv:1008.5137} \BibitemShut {NoStop}%
\bibitem [{\citenamefont {Fröhlich}\ \emph {et~al.}(1985)\citenamefont
  {Fröhlich}, \citenamefont {Martinelli}, \citenamefont {Scoppola},\ and\
  \citenamefont {Spencer}}]{Frohlich1985}%
  \BibitemOpen
  \bibfield  {author} {\bibinfo {author} {\bibfnamefont {J.}~\bibnamefont
  {Fröhlich}}, \bibinfo {author} {\bibfnamefont {F.}~\bibnamefont
  {Martinelli}}, \bibinfo {author} {\bibfnamefont {E.}~\bibnamefont
  {Scoppola}}, \ and\ \bibinfo {author} {\bibfnamefont {T.}~\bibnamefont
  {Spencer}},\ }\href {\doibase 10.1007/BF01212355} {\bibfield  {journal}
  {\bibinfo  {journal} {Commun. Math. Phys.}\ }\textbf {\bibinfo {volume}
  {101}},\ \bibinfo {pages} {21} (\bibinfo {year} {1985})}\BibitemShut
  {NoStop}%
\bibitem [{\citenamefont {Hamza}\ \emph {et~al.}(2012)\citenamefont {Hamza},
  \citenamefont {Sims},\ and\ \citenamefont {Stolz}}]{Hamza2012}%
  \BibitemOpen
  \bibfield  {author} {\bibinfo {author} {\bibfnamefont {E.}~\bibnamefont
  {Hamza}}, \bibinfo {author} {\bibfnamefont {R.}~\bibnamefont {Sims}}, \ and\
  \bibinfo {author} {\bibfnamefont {G.}~\bibnamefont {Stolz}},\ }\href
  {\doibase 10.1007/s00220-012-1544-6} {\bibfield  {journal} {\bibinfo
  {journal} {Commun. Math. Phys.}\ }\textbf {\bibinfo {volume} {315}},\
  \bibinfo {pages} {215} (\bibinfo {year} {2012})}\BibitemShut {NoStop}%
\bibitem [{\citenamefont {Reck}\ \emph {et~al.}(1994)\citenamefont {Reck},
  \citenamefont {Zeilinger}, \citenamefont {Bernstein},\ and\ \citenamefont
  {Bertani}}]{Reck1994}%
  \BibitemOpen
  \bibfield  {author} {\bibinfo {author} {\bibfnamefont {M.}~\bibnamefont
  {Reck}}, \bibinfo {author} {\bibfnamefont {A.}~\bibnamefont {Zeilinger}},
  \bibinfo {author} {\bibfnamefont {H.~J.}\ \bibnamefont {Bernstein}}, \ and\
  \bibinfo {author} {\bibfnamefont {P.}~\bibnamefont {Bertani}},\ }\href
  {\doibase 10.1103/PhysRevLett.73.58} {\bibfield  {journal} {\bibinfo
  {journal} {Phys. Rev. Lett.}\ }\textbf {\bibinfo {volume} {73}},\ \bibinfo
  {pages} {58} (\bibinfo {year} {1994})}\BibitemShut {NoStop}%
\bibitem [{\citenamefont {Clements}\ \emph {et~al.}(2016)\citenamefont
  {Clements}, \citenamefont {Humphreys}, \citenamefont {Metcalf}, \citenamefont
  {Kolthammer},\ and\ \citenamefont {Walsmley}}]{Clements2016}%
  \BibitemOpen
  \bibfield  {author} {\bibinfo {author} {\bibfnamefont {W.~R.}\ \bibnamefont
  {Clements}}, \bibinfo {author} {\bibfnamefont {P.~C.}\ \bibnamefont
  {Humphreys}}, \bibinfo {author} {\bibfnamefont {B.~J.}\ \bibnamefont
  {Metcalf}}, \bibinfo {author} {\bibfnamefont {W.~S.}\ \bibnamefont
  {Kolthammer}}, \ and\ \bibinfo {author} {\bibfnamefont {I.~A.}\ \bibnamefont
  {Walsmley}},\ }\href {\doibase 10.1364/OPTICA.3.001460} {\bibfield  {journal}
  {\bibinfo  {journal} {Optica}\ }\textbf {\bibinfo {volume} {3}},\ \bibinfo
  {pages} {1460} (\bibinfo {year} {2016})}\BibitemShut {NoStop}%
\bibitem [{Note3()}]{Note3}%
  \BibitemOpen
  \bibinfo {note} {AA's construction applied each column of the unitary in
  $O(\protect \qopname \relax o{log}m)$-depth, whereas we can only apply it in
  $O(m^{1/d})$ depth because of the spatial locality of the Hamiltonian and the
  Lieb-Robinson bounds that follow.}\BibitemShut {Stop}%
\bibitem [{\citenamefont {Deshpande}\ \emph {et~al.}()\citenamefont
  {Deshpande}, \citenamefont {Fefferman}, \citenamefont {Tran}, \citenamefont
  {Foss-Feig},\ and\ \citenamefont {Gorshkov}}]{Deshpande}%
  \BibitemOpen
  \bibfield  {author} {\bibinfo {author} {\bibfnamefont {A.}~\bibnamefont
  {Deshpande}}, \bibinfo {author} {\bibfnamefont {B.}~\bibnamefont
  {Fefferman}}, \bibinfo {author} {\bibfnamefont {M.~C.}\ \bibnamefont {Tran}},
  \bibinfo {author} {\bibfnamefont {M.}~\bibnamefont {Foss-Feig}}, \ and\
  \bibinfo {author} {\bibfnamefont {A.~V.}\ \bibnamefont {Gorshkov}},\
  }\href@noop {} {\bibinfo  {journal} {unpublished}\ }\BibitemShut {NoStop}%
\bibitem [{Note4()}]{Note4}%
  \BibitemOpen
\bibfield  {journal} {  }\bibinfo {note} {The transition here is sharp if there
  is a specific time $t_{\protect \mathrm {samp}} = \Theta (n^c)$ such that
  sampling is easy for all $t < t_{\protect \mathrm {samp}}$ and hard
  otherwise. The transition is coarse if sampling is easy for all $t = \Theta
  (n^c)$ and hard for all $t$ such that $\protect \qopname \relax
  m{lim}_{n\rightarrow \infty }\protect \frac {t}{n^c} \rightarrow \infty
  $.}\BibitemShut {Stop}%
\bibitem [{\citenamefont {Brod}(2015)}]{Brod2015}%
  \BibitemOpen
  \bibfield  {author} {\bibinfo {author} {\bibfnamefont {D.~J.}\ \bibnamefont
  {Brod}},\ }\href {\doibase 10.1103/PhysRevA.91.042316} {\bibfield  {journal}
  {\bibinfo  {journal} {Phys. Rev. A}\ }\textbf {\bibinfo {volume} {91}},\
  \bibinfo {pages} {042316} (\bibinfo {year} {2015})}\BibitemShut {NoStop}%
\bibitem [{\citenamefont {Banerjee}\ and\ \citenamefont
  {Altman}(2017)}]{Banerjee2017}%
  \BibitemOpen
  \bibfield  {author} {\bibinfo {author} {\bibfnamefont {S.}~\bibnamefont
  {Banerjee}}\ and\ \bibinfo {author} {\bibfnamefont {E.}~\bibnamefont
  {Altman}},\ }\href {\doibase 10.1103/PhysRevB.95.134302} {\bibfield
  {journal} {\bibinfo  {journal} {Phys. Rev. B}\ }\textbf {\bibinfo {volume}
  {95}},\ \bibinfo {pages} {134302} (\bibinfo {year} {2017})}\BibitemShut
  {NoStop}%
\bibitem [{\citenamefont {He}\ and\ \citenamefont {Lu}(2017)}]{He2017}%
  \BibitemOpen
  \bibfield  {author} {\bibinfo {author} {\bibfnamefont {R.-Q.}\ \bibnamefont
  {He}}\ and\ \bibinfo {author} {\bibfnamefont {Z.-Y.}\ \bibnamefont {Lu}},\
  }\href {\doibase 10.1103/PhysRevB.95.054201} {\bibfield  {journal} {\bibinfo
  {journal} {Phys. Rev. B}\ }\textbf {\bibinfo {volume} {95}},\ \bibinfo
  {pages} {054201} (\bibinfo {year} {2017})}\BibitemShut {NoStop}%
\bibitem [{\citenamefont {Syzranov}\ \emph {et~al.}(2017)\citenamefont
  {Syzranov}, \citenamefont {Gorshkov},\ and\ \citenamefont
  {Galitski}}]{Syzranov2017}%
  \BibitemOpen
  \bibfield  {author} {\bibinfo {author} {\bibfnamefont {S.~V.}\ \bibnamefont
  {Syzranov}}, \bibinfo {author} {\bibfnamefont {A.~V.}\ \bibnamefont
  {Gorshkov}}, \ and\ \bibinfo {author} {\bibfnamefont {V.~M.}\ \bibnamefont
  {Galitski}},\ }\href@noop {} {\  (\bibinfo {year} {2017})},\ \Eprint
  {http://arxiv.org/abs/1709.09296} {arXiv:1709.09296} \BibitemShut {NoStop}%
\bibitem [{\citenamefont {Hosur}\ \emph {et~al.}(2016)\citenamefont {Hosur},
  \citenamefont {Qi}, \citenamefont {Roberts},\ and\ \citenamefont
  {Yoshida}}]{Hosur2016a}%
  \BibitemOpen
  \bibfield  {author} {\bibinfo {author} {\bibfnamefont {P.}~\bibnamefont
  {Hosur}}, \bibinfo {author} {\bibfnamefont {X.-L.}\ \bibnamefont {Qi}},
  \bibinfo {author} {\bibfnamefont {D.~A.}\ \bibnamefont {Roberts}}, \ and\
  \bibinfo {author} {\bibfnamefont {B.}~\bibnamefont {Yoshida}},\ }\href
  {\doibase 10.1007/JHEP02(2016)004} {\bibfield  {journal} {\bibinfo  {journal}
  {J. High Energ. Phys.}\ }\textbf {\bibinfo {volume} {2016}},\ \bibinfo
  {pages} {4} (\bibinfo {year} {2016})}\BibitemShut {NoStop}%
\bibitem [{\citenamefont {Roberts}\ and\ \citenamefont
  {Yoshida}(2017)}]{Roberts2017}%
  \BibitemOpen
  \bibfield  {author} {\bibinfo {author} {\bibfnamefont {D.~A.}\ \bibnamefont
  {Roberts}}\ and\ \bibinfo {author} {\bibfnamefont {B.}~\bibnamefont
  {Yoshida}},\ }\href {\doibase 10.1007/JHEP04(2017)121} {\bibfield  {journal}
  {\bibinfo  {journal} {J. High Energ. Phys.}\ }\textbf {\bibinfo {volume}
  {2017}},\ \bibinfo {pages} {121} (\bibinfo {year} {2017})}\BibitemShut
  {NoStop}%
\bibitem [{\citenamefont {Huang}\ \emph {et~al.}(2016)\citenamefont {Huang},
  \citenamefont {Zhang},\ and\ \citenamefont {Chen}}]{Huang2016}%
  \BibitemOpen
  \bibfield  {author} {\bibinfo {author} {\bibfnamefont {Y.}~\bibnamefont
  {Huang}}, \bibinfo {author} {\bibfnamefont {Y.-L.}\ \bibnamefont {Zhang}}, \
  and\ \bibinfo {author} {\bibfnamefont {X.}~\bibnamefont {Chen}},\ }\href
  {\doibase 10.1002/andp.201600318} {\bibfield  {journal} {\bibinfo  {journal}
  {Ann. Phys. (Berlin)}\ }\textbf {\bibinfo {volume} {529}},\ \bibinfo {pages}
  {1600318} (\bibinfo {year} {2016})}\BibitemShut {NoStop}%
\bibitem [{\citenamefont {Fan}\ \emph {et~al.}(2017)\citenamefont {Fan},
  \citenamefont {Zhang}, \citenamefont {Shen},\ and\ \citenamefont
  {Zhai}}]{Fan2017}%
  \BibitemOpen
  \bibfield  {author} {\bibinfo {author} {\bibfnamefont {R.}~\bibnamefont
  {Fan}}, \bibinfo {author} {\bibfnamefont {P.}~\bibnamefont {Zhang}}, \bibinfo
  {author} {\bibfnamefont {H.}~\bibnamefont {Shen}}, \ and\ \bibinfo {author}
  {\bibfnamefont {H.}~\bibnamefont {Zhai}},\ }\href {\doibase
  10.1016/j.scib.2017.04.011} {\bibfield  {journal} {\bibinfo  {journal} {Sci.
  Bull.}\ }\textbf {\bibinfo {volume} {62}},\ \bibinfo {pages} {707} (\bibinfo
  {year} {2017})}\BibitemShut {NoStop}%
\bibitem [{\citenamefont {Chen}(2016)}]{Chen2016a}%
  \BibitemOpen
  \bibfield  {author} {\bibinfo {author} {\bibfnamefont {Y.}~\bibnamefont
  {Chen}},\ }\href@noop {} {\  (\bibinfo {year} {2016})},\ \Eprint
  {http://arxiv.org/abs/1608.02765} {arXiv:1608.02765} \BibitemShut {NoStop}%
\bibitem [{\citenamefont {Swingle}\ and\ \citenamefont
  {Chowdhury}(2017)}]{Swingle2017}%
  \BibitemOpen
  \bibfield  {author} {\bibinfo {author} {\bibfnamefont {B.}~\bibnamefont
  {Swingle}}\ and\ \bibinfo {author} {\bibfnamefont {D.}~\bibnamefont
  {Chowdhury}},\ }\href {\doibase 10.1103/PhysRevB.95.060201} {\bibfield
  {journal} {\bibinfo  {journal} {Phys. Rev. B}\ }\textbf {\bibinfo {volume}
  {95}},\ \bibinfo {pages} {060201} (\bibinfo {year} {2017})}\BibitemShut
  {NoStop}%
\bibitem [{\citenamefont {Sachdev}\ and\ \citenamefont
  {Ye}(1993)}]{Sachdev1993}%
  \BibitemOpen
  \bibfield  {author} {\bibinfo {author} {\bibfnamefont {S.}~\bibnamefont
  {Sachdev}}\ and\ \bibinfo {author} {\bibfnamefont {J.}~\bibnamefont {Ye}},\
  }\href {\doibase 10.1103/PhysRevLett.70.3339} {\bibfield  {journal} {\bibinfo
   {journal} {Phys. Rev. Lett.}\ }\textbf {\bibinfo {volume} {70}},\ \bibinfo
  {pages} {3339} (\bibinfo {year} {1993})}\BibitemShut {NoStop}%
\bibitem [{\citenamefont {Kitaev}(2015)}]{Kitaev2015}%
  \BibitemOpen
  \bibfield  {author} {\bibinfo {author} {\bibfnamefont {A.}~\bibnamefont
  {Kitaev}},\ }\href {http://online.kitp.ucsb.edu/online/entangled15/}
  {\enquote {\bibinfo {title} {A simple model of quantum holography},}\ }
  (\bibinfo {year} {2015})\BibitemShut {NoStop}%
\bibitem [{\citenamefont {Maldacena}\ and\ \citenamefont
  {Stanford}(2016)}]{Maldacena2016}%
  \BibitemOpen
  \bibfield  {author} {\bibinfo {author} {\bibfnamefont {J.}~\bibnamefont
  {Maldacena}}\ and\ \bibinfo {author} {\bibfnamefont {D.}~\bibnamefont
  {Stanford}},\ }\href {\doibase 10.1103/PhysRevD.94.106002} {\bibfield
  {journal} {\bibinfo  {journal} {Phys. Rev. D}\ }\textbf {\bibinfo {volume}
  {94}},\ \bibinfo {pages} {106002} (\bibinfo {year} {2016})}\BibitemShut
  {NoStop}%
\bibitem [{\citenamefont {Maldacena}\ \emph {et~al.}(2016)\citenamefont
  {Maldacena}, \citenamefont {Shenker},\ and\ \citenamefont
  {Stanford}}]{Maldacena2016a}%
  \BibitemOpen
  \bibfield  {author} {\bibinfo {author} {\bibfnamefont {J.}~\bibnamefont
  {Maldacena}}, \bibinfo {author} {\bibfnamefont {S.~H.}\ \bibnamefont
  {Shenker}}, \ and\ \bibinfo {author} {\bibfnamefont {D.}~\bibnamefont
  {Stanford}},\ }\href {\doibase 10.1007/JHEP08(2016)106} {\bibfield  {journal}
  {\bibinfo  {journal} {J. High Energy Phys.}\ }\textbf {\bibinfo {volume}
  {2016}},\ \bibinfo {pages} {106} (\bibinfo {year} {2016})}\BibitemShut
  {NoStop}%
\bibitem [{\citenamefont {Sekino}\ and\ \citenamefont
  {Susskind}(2008)}]{Sekino2008}%
  \BibitemOpen
  \bibfield  {author} {\bibinfo {author} {\bibfnamefont {Y.}~\bibnamefont
  {Sekino}}\ and\ \bibinfo {author} {\bibfnamefont {L.}~\bibnamefont
  {Susskind}},\ }\href {\doibase 10.1088/1126-6708/2008/10/065} {\bibfield
  {journal} {\bibinfo  {journal} {J. High Energy Phys.}\ }\textbf {\bibinfo
  {volume} {2008}},\ \bibinfo {pages} {065} (\bibinfo {year}
  {2008})}\BibitemShut {NoStop}%
\bibitem [{\citenamefont {Lashkari}\ \emph {et~al.}(2013)\citenamefont
  {Lashkari}, \citenamefont {Stanford}, \citenamefont {Hastings}, \citenamefont
  {Osborne},\ and\ \citenamefont {Hayden}}]{Lashkari2013}%
  \BibitemOpen
  \bibfield  {author} {\bibinfo {author} {\bibfnamefont {N.}~\bibnamefont
  {Lashkari}}, \bibinfo {author} {\bibfnamefont {D.}~\bibnamefont {Stanford}},
  \bibinfo {author} {\bibfnamefont {M.}~\bibnamefont {Hastings}}, \bibinfo
  {author} {\bibfnamefont {T.}~\bibnamefont {Osborne}}, \ and\ \bibinfo
  {author} {\bibfnamefont {P.}~\bibnamefont {Hayden}},\ }\href {\doibase
  10.1007/JHEP04(2013)022} {\bibfield  {journal} {\bibinfo  {journal} {J. High
  Energy Phys.}\ }\textbf {\bibinfo {volume} {2013}},\ \bibinfo {pages} {22}
  (\bibinfo {year} {2013})}\BibitemShut {NoStop}%
\bibitem [{\citenamefont {Swingle}(2016)}]{Swingle2016a}%
  \BibitemOpen
  \bibfield  {author} {\bibinfo {author} {\bibfnamefont {B.}~\bibnamefont
  {Swingle}},\ }\href {\doibase 10.1145/2983549} {\bibfield  {journal}
  {\bibinfo  {journal} {XRDS Crossroads ACM Mag Stud.}\ }\textbf {\bibinfo
  {volume} {23}},\ \bibinfo {pages} {52} (\bibinfo {year} {2016})}\BibitemShut
  {NoStop}%
\bibitem [{\citenamefont {Brown}\ \emph
  {et~al.}(2016{\natexlab{a}})\citenamefont {Brown}, \citenamefont {Roberts},
  \citenamefont {Susskind}, \citenamefont {Swingle},\ and\ \citenamefont
  {Zhao}}]{Brown2016}%
  \BibitemOpen
  \bibfield  {author} {\bibinfo {author} {\bibfnamefont {A.~R.}\ \bibnamefont
  {Brown}}, \bibinfo {author} {\bibfnamefont {D.~A.}\ \bibnamefont {Roberts}},
  \bibinfo {author} {\bibfnamefont {L.}~\bibnamefont {Susskind}}, \bibinfo
  {author} {\bibfnamefont {B.}~\bibnamefont {Swingle}}, \ and\ \bibinfo
  {author} {\bibfnamefont {Y.}~\bibnamefont {Zhao}},\ }\href {\doibase
  10.1103/PhysRevLett.116.191301} {\bibfield  {journal} {\bibinfo  {journal}
  {Phys. Rev. Lett.}\ }\textbf {\bibinfo {volume} {116}},\ \bibinfo {pages}
  {191301} (\bibinfo {year} {2016}{\natexlab{a}})}\BibitemShut {NoStop}%
\bibitem [{\citenamefont {Brown}\ \emph
  {et~al.}(2016{\natexlab{b}})\citenamefont {Brown}, \citenamefont {Roberts},
  \citenamefont {Susskind}, \citenamefont {Swingle},\ and\ \citenamefont
  {Zhao}}]{Brown2016a}%
  \BibitemOpen
  \bibfield  {author} {\bibinfo {author} {\bibfnamefont {A.~R.}\ \bibnamefont
  {Brown}}, \bibinfo {author} {\bibfnamefont {D.~A.}\ \bibnamefont {Roberts}},
  \bibinfo {author} {\bibfnamefont {L.}~\bibnamefont {Susskind}}, \bibinfo
  {author} {\bibfnamefont {B.}~\bibnamefont {Swingle}}, \ and\ \bibinfo
  {author} {\bibfnamefont {Y.}~\bibnamefont {Zhao}},\ }\href {\doibase
  10.1103/PhysRevD.93.086006} {\bibfield  {journal} {\bibinfo  {journal} {Phys.
  Rev. D}\ }\textbf {\bibinfo {volume} {93}},\ \bibinfo {pages} {086006}
  (\bibinfo {year} {2016}{\natexlab{b}})}\BibitemShut {NoStop}%
\bibitem [{\citenamefont {Brown}\ and\ \citenamefont
  {Susskind}(2018)}]{Brown2018}%
  \BibitemOpen
  \bibfield  {author} {\bibinfo {author} {\bibfnamefont {A.~R.}\ \bibnamefont
  {Brown}}\ and\ \bibinfo {author} {\bibfnamefont {L.}~\bibnamefont
  {Susskind}},\ }\href {\doibase 10.1103/PhysRevD.97.086015} {\bibfield
  {journal} {\bibinfo  {journal} {Phys. Rev. D}\ }\textbf {\bibinfo {volume}
{97}},\ \bibinfo {pages} {086015} (\bibinfo {year} {2018})}\BibitemShut {NoStop}%
\bibitem [{\citenamefont {Aaronson}\ and\ \citenamefont
  {Chen}(2016)}]{Aaronson2016b}%
  \BibitemOpen
  \bibfield  {author} {\bibinfo {author} {\bibfnamefont {S.}~\bibnamefont
  {Aaronson}}\ and\ \bibinfo {author} {\bibfnamefont {L.}~\bibnamefont
  {Chen}},\ }\href@noop {} {\  (\bibinfo {year} {2016})},\ \Eprint
  {http://arxiv.org/abs/1612.05903} {arXiv:1612.05903} \BibitemShut {NoStop}%
\bibitem [{\citenamefont {Marzolino}\ and\ \citenamefont
  {Prosen}(2016)}]{Marzolino2016}%
  \BibitemOpen
  \bibfield  {author} {\bibinfo {author} {\bibfnamefont {U.}~\bibnamefont
  {Marzolino}}\ and\ \bibinfo {author} {\bibfnamefont {T.}~\bibnamefont
  {Prosen}},\ }\href {\doibase 10.1103/PhysRevA.93.032306} {\bibfield
  {journal} {\bibinfo  {journal} {Phys. Rev. A}\ }\textbf {\bibinfo {volume}
  {93}},\ \bibinfo {pages} {032306} (\bibinfo {year} {2016})}\BibitemShut
  {NoStop}%
\bibitem [{\citenamefont {Amico}\ \emph {et~al.}(2008)\citenamefont {Amico},
  \citenamefont {Fazio}, \citenamefont {Osterloh},\ and\ \citenamefont
  {Vedral}}]{Amico2008}%
  \BibitemOpen
  \bibfield  {author} {\bibinfo {author} {\bibfnamefont {L.}~\bibnamefont
  {Amico}}, \bibinfo {author} {\bibfnamefont {R.}~\bibnamefont {Fazio}},
  \bibinfo {author} {\bibfnamefont {A.}~\bibnamefont {Osterloh}}, \ and\
  \bibinfo {author} {\bibfnamefont {V.}~\bibnamefont {Vedral}},\ }\href
  {\doibase 10.1103/RevModPhys.80.517} {\bibfield  {journal} {\bibinfo
  {journal} {Rev. Mod. Phys.}\ }\textbf {\bibinfo {volume} {80}},\ \bibinfo
  {pages} {517} (\bibinfo {year} {2008})}\BibitemShut {NoStop}%
\bibitem [{\citenamefont {Zeng}\ \emph {et~al.}(2015)\citenamefont {Zeng},
  \citenamefont {Chen}, \citenamefont {Zhou},\ and\ \citenamefont
  {Wen}}]{Zeng2015}%
  \BibitemOpen
  \bibfield  {author} {\bibinfo {author} {\bibfnamefont {B.}~\bibnamefont
  {Zeng}}, \bibinfo {author} {\bibfnamefont {X.}~\bibnamefont {Chen}}, \bibinfo
  {author} {\bibfnamefont {D.-L.}\ \bibnamefont {Zhou}}, \ and\ \bibinfo
  {author} {\bibfnamefont {X.-G.}\ \bibnamefont {Wen}},\ }\href@noop {} {\
  (\bibinfo {year} {2015})},\ \Eprint {http://arxiv.org/abs/1508.02595}
  {arXiv:1508.02595} \BibitemShut {NoStop}%
\bibitem [{\citenamefont {Bardarson}\ \emph {et~al.}(2012)\citenamefont
  {Bardarson}, \citenamefont {Pollmann},\ and\ \citenamefont
  {Moore}}]{Bardarson2012}%
  \BibitemOpen
  \bibfield  {author} {\bibinfo {author} {\bibfnamefont {J.~H.}\ \bibnamefont
  {Bardarson}}, \bibinfo {author} {\bibfnamefont {F.}~\bibnamefont {Pollmann}},
  \ and\ \bibinfo {author} {\bibfnamefont {J.~E.}\ \bibnamefont {Moore}},\
  }\href {\doibase 10.1103/PhysRevLett.109.017202} {\bibfield  {journal}
  {\bibinfo  {journal} {Phys. Rev. Lett.}\ }\textbf {\bibinfo {volume} {109}},\
  \bibinfo {pages} {017202} (\bibinfo {year} {2012})}\BibitemShut {NoStop}%
\bibitem [{\citenamefont {Yang}\ \emph {et~al.}(2017)\citenamefont {Yang},
  \citenamefont {Hamma}, \citenamefont {Giampaolo}, \citenamefont {Mucciolo},\
  and\ \citenamefont {Chamon}}]{Yang2017}%
  \BibitemOpen
  \bibfield  {author} {\bibinfo {author} {\bibfnamefont {Z.-C.}\ \bibnamefont
  {Yang}}, \bibinfo {author} {\bibfnamefont {A.}~\bibnamefont {Hamma}},
  \bibinfo {author} {\bibfnamefont {S.~M.}\ \bibnamefont {Giampaolo}}, \bibinfo
  {author} {\bibfnamefont {E.~R.}\ \bibnamefont {Mucciolo}}, \ and\ \bibinfo
  {author} {\bibfnamefont {C.}~\bibnamefont {Chamon}},\ }\href {\doibase
  10.1103/PhysRevB.96.020408} {\bibfield  {journal} {\bibinfo  {journal} {Phys.
  Rev. B}\ }\textbf {\bibinfo {volume} {96}},\ \bibinfo {pages} {020408}
  (\bibinfo {year} {2017})}\BibitemShut {NoStop}%
\bibitem [{\citenamefont {Morsch}\ and\ \citenamefont
  {Oberthaler}(2006)}]{Morsch2006}%
  \BibitemOpen
  \bibfield  {author} {\bibinfo {author} {\bibfnamefont {O.}~\bibnamefont
  {Morsch}}\ and\ \bibinfo {author} {\bibfnamefont {M.}~\bibnamefont
  {Oberthaler}},\ }\href {\doibase 10.1103/RevModPhys.78.179} {\bibfield
  {journal} {\bibinfo  {journal} {Rev. Mod. Phys.}\ }\textbf {\bibinfo {volume}
  {78}},\ \bibinfo {pages} {179} (\bibinfo {year} {2006})}\BibitemShut
  {NoStop}%
\bibitem [{\citenamefont {Bloch}\ \emph {et~al.}(2012)\citenamefont {Bloch},
  \citenamefont {Dalibard},\ and\ \citenamefont {Nascimbène}}]{Bloch2012}%
  \BibitemOpen
  \bibfield  {author} {\bibinfo {author} {\bibfnamefont {I.}~\bibnamefont
  {Bloch}}, \bibinfo {author} {\bibfnamefont {J.}~\bibnamefont {Dalibard}}, \
  and\ \bibinfo {author} {\bibfnamefont {S.}~\bibnamefont {Nascimbène}},\
  }\href {\doibase 10.1038/nphys2259} {\bibfield  {journal} {\bibinfo
  {journal} {Nat. Phys.}\ }\textbf {\bibinfo {volume} {8}},\ \bibinfo {pages}
  {267} (\bibinfo {year} {2012})}\BibitemShut {NoStop}%
\bibitem [{\citenamefont {Buluta}\ and\ \citenamefont
  {Nori}(2009)}]{Buluta2009}%
  \BibitemOpen
  \bibfield  {author} {\bibinfo {author} {\bibfnamefont {I.}~\bibnamefont
  {Buluta}}\ and\ \bibinfo {author} {\bibfnamefont {F.}~\bibnamefont {Nori}},\
  }\href {\doibase 10.1126/science.1177838} {\bibfield  {journal} {\bibinfo
  {journal} {Science}\ }\textbf {\bibinfo {volume} {326}},\ \bibinfo {pages}
  {108} (\bibinfo {year} {2009})}\BibitemShut {NoStop}%
\bibitem [{\citenamefont {Devoret}\ and\ \citenamefont
  {Schoelkopf}(2013)}]{Devoret2013}%
  \BibitemOpen
  \bibfield  {author} {\bibinfo {author} {\bibfnamefont {M.~H.}\ \bibnamefont
  {Devoret}}\ and\ \bibinfo {author} {\bibfnamefont {R.~J.}\ \bibnamefont
  {Schoelkopf}},\ }\href {\doibase 10.1126/science.1231930} {\bibfield
  {journal} {\bibinfo  {journal} {Science}\ }\textbf {\bibinfo {volume}
  {339}},\ \bibinfo {pages} {1169} (\bibinfo {year} {2013})}\BibitemShut
  {NoStop}%
\bibitem [{\citenamefont {Bakr}\ \emph {et~al.}(2009)\citenamefont {Bakr},
  \citenamefont {Gillen}, \citenamefont {Peng}, \citenamefont {Fölling},\ and\
  \citenamefont {Greiner}}]{Bakr2009}%
  \BibitemOpen
  \bibfield  {author} {\bibinfo {author} {\bibfnamefont {W.~S.}\ \bibnamefont
  {Bakr}}, \bibinfo {author} {\bibfnamefont {J.~I.}\ \bibnamefont {Gillen}},
  \bibinfo {author} {\bibfnamefont {A.}~\bibnamefont {Peng}}, \bibinfo {author}
  {\bibfnamefont {S.}~\bibnamefont {Fölling}}, \ and\ \bibinfo {author}
  {\bibfnamefont {M.}~\bibnamefont {Greiner}},\ }\href {\doibase
  10.1038/nature08482} {\bibfield  {journal} {\bibinfo  {journal} {Nature}\
  }\textbf {\bibinfo {volume} {462}},\ \bibinfo {pages} {74} (\bibinfo {year}
  {2009})}\BibitemShut {NoStop}%
\bibitem [{\citenamefont {Bakr}\ \emph {et~al.}(2010)\citenamefont {Bakr},
  \citenamefont {Peng}, \citenamefont {Tai}, \citenamefont {Ma}, \citenamefont
  {Simon}, \citenamefont {Gillen}, \citenamefont {Foelling}, \citenamefont
  {Pollet},\ and\ \citenamefont {Greiner}}]{Bakr2010}%
  \BibitemOpen
  \bibfield  {author} {\bibinfo {author} {\bibfnamefont {W.~S.}\ \bibnamefont
  {Bakr}}, \bibinfo {author} {\bibfnamefont {A.}~\bibnamefont {Peng}}, \bibinfo
  {author} {\bibfnamefont {M.~E.}\ \bibnamefont {Tai}}, \bibinfo {author}
  {\bibfnamefont {R.}~\bibnamefont {Ma}}, \bibinfo {author} {\bibfnamefont
  {J.}~\bibnamefont {Simon}}, \bibinfo {author} {\bibfnamefont {J.~I.}\
  \bibnamefont {Gillen}}, \bibinfo {author} {\bibfnamefont {S.}~\bibnamefont
  {Foelling}}, \bibinfo {author} {\bibfnamefont {L.}~\bibnamefont {Pollet}}, \
  and\ \bibinfo {author} {\bibfnamefont {M.}~\bibnamefont {Greiner}},\ }\href
  {\doibase 10.1126/science.1192368} {\bibfield  {journal} {\bibinfo  {journal}
  {Science}\ }\textbf {\bibinfo {volume} {329}},\ \bibinfo {pages} {547}
  (\bibinfo {year} {2010})}\BibitemShut {NoStop}%
\bibitem [{\citenamefont {Sherson}\ \emph {et~al.}(2010)\citenamefont
  {Sherson}, \citenamefont {Weitenberg}, \citenamefont {Endres}, \citenamefont
  {Cheneau}, \citenamefont {Bloch},\ and\ \citenamefont {Kuhr}}]{Sherson2010}%
  \BibitemOpen
  \bibfield  {author} {\bibinfo {author} {\bibfnamefont {J.~F.}\ \bibnamefont
  {Sherson}}, \bibinfo {author} {\bibfnamefont {C.}~\bibnamefont {Weitenberg}},
  \bibinfo {author} {\bibfnamefont {M.}~\bibnamefont {Endres}}, \bibinfo
  {author} {\bibfnamefont {M.}~\bibnamefont {Cheneau}}, \bibinfo {author}
  {\bibfnamefont {I.}~\bibnamefont {Bloch}}, \ and\ \bibinfo {author}
  {\bibfnamefont {S.}~\bibnamefont {Kuhr}},\ }\href {\doibase
  10.1038/nature09378} {\bibfield  {journal} {\bibinfo  {journal} {Nature}\
  }\textbf {\bibinfo {volume} {467}},\ \bibinfo {pages} {68} (\bibinfo {year}
  {2010})}\BibitemShut {NoStop}%
\bibitem [{\citenamefont {Mazza}\ \emph {et~al.}(2015)\citenamefont {Mazza},
  \citenamefont {Rossini}, \citenamefont {Fazio},\ and\ \citenamefont
  {Endres}}]{Mazza2015}%
  \BibitemOpen
  \bibfield  {author} {\bibinfo {author} {\bibfnamefont {L.}~\bibnamefont
  {Mazza}}, \bibinfo {author} {\bibfnamefont {D.}~\bibnamefont {Rossini}},
  \bibinfo {author} {\bibfnamefont {R.}~\bibnamefont {Fazio}}, \ and\ \bibinfo
  {author} {\bibfnamefont {M.}~\bibnamefont {Endres}},\ }\href {\doibase
  10.1088/1367-2630/17/1/013015} {\bibfield  {journal} {\bibinfo  {journal}
  {New J. Phys.}\ }\textbf {\bibinfo {volume} {17}},\ \bibinfo {pages} {013015}
  (\bibinfo {year} {2015})}\BibitemShut {NoStop}%
\bibitem [{\citenamefont {Miranda}\ \emph {et~al.}(2015)\citenamefont
  {Miranda}, \citenamefont {Inoue}, \citenamefont {Okuyama}, \citenamefont
  {Nakamoto},\ and\ \citenamefont {Kozuma}}]{Miranda2015}%
  \BibitemOpen
  \bibfield  {author} {\bibinfo {author} {\bibfnamefont {M.}~\bibnamefont
  {Miranda}}, \bibinfo {author} {\bibfnamefont {R.}~\bibnamefont {Inoue}},
  \bibinfo {author} {\bibfnamefont {Y.}~\bibnamefont {Okuyama}}, \bibinfo
  {author} {\bibfnamefont {A.}~\bibnamefont {Nakamoto}}, \ and\ \bibinfo
  {author} {\bibfnamefont {M.}~\bibnamefont {Kozuma}},\ }\href {\doibase
  10.1103/PhysRevA.91.063414} {\bibfield  {journal} {\bibinfo  {journal} {Phys.
  Rev. A}\ }\textbf {\bibinfo {volume} {91}},\ \bibinfo {pages} {063414}
  (\bibinfo {year} {2015})}\BibitemShut {NoStop}%
\bibitem [{\citenamefont {Yamamoto}\ \emph {et~al.}(2017)\citenamefont
  {Yamamoto}, \citenamefont {Kobayashi}, \citenamefont {Kato}, \citenamefont
  {Kuno}, \citenamefont {Sakura},\ and\ \citenamefont
  {Takahashi}}]{Yamamoto2017}%
  \BibitemOpen
  \bibfield  {author} {\bibinfo {author} {\bibfnamefont {R.}~\bibnamefont
  {Yamamoto}}, \bibinfo {author} {\bibfnamefont {J.}~\bibnamefont {Kobayashi}},
  \bibinfo {author} {\bibfnamefont {K.}~\bibnamefont {Kato}}, \bibinfo {author}
  {\bibfnamefont {T.}~\bibnamefont {Kuno}}, \bibinfo {author} {\bibfnamefont
  {Y.}~\bibnamefont {Sakura}}, \ and\ \bibinfo {author} {\bibfnamefont
  {Y.}~\bibnamefont {Takahashi}},\ }\href {\doibase 10.1103/PhysRevA.96.033610}
  {\bibfield  {journal} {\bibinfo  {journal} {Phys. Rev. A}\ }\textbf {\bibinfo
  {volume} {96}},\ \bibinfo {pages} {033610} (\bibinfo {year}
  {2017})}\BibitemShut {NoStop}%
\bibitem [{\citenamefont {Weitenberg}\ \emph {et~al.}(2011)\citenamefont
  {Weitenberg}, \citenamefont {Endres}, \citenamefont {Sherson}, \citenamefont
  {Cheneau}, \citenamefont {Schauß}, \citenamefont {Fukuhara}, \citenamefont
  {Bloch},\ and\ \citenamefont {Kuhr}}]{Weitenberg2011}%
  \BibitemOpen
  \bibfield  {author} {\bibinfo {author} {\bibfnamefont {C.}~\bibnamefont
  {Weitenberg}}, \bibinfo {author} {\bibfnamefont {M.}~\bibnamefont {Endres}},
  \bibinfo {author} {\bibfnamefont {J.~F.}\ \bibnamefont {Sherson}}, \bibinfo
  {author} {\bibfnamefont {M.}~\bibnamefont {Cheneau}}, \bibinfo {author}
  {\bibfnamefont {P.}~\bibnamefont {Schauß}}, \bibinfo {author} {\bibfnamefont
  {T.}~\bibnamefont {Fukuhara}}, \bibinfo {author} {\bibfnamefont
  {I.}~\bibnamefont {Bloch}}, \ and\ \bibinfo {author} {\bibfnamefont
  {S.}~\bibnamefont {Kuhr}},\ }\href {\doibase 10.1038/nature09827} {\bibfield
  {journal} {\bibinfo  {journal} {Nature}\ }\textbf {\bibinfo {volume} {471}},\
  \bibinfo {pages} {319} (\bibinfo {year} {2011})}\BibitemShut {NoStop}%
\bibitem [{\citenamefont {Wang}\ \emph {et~al.}(2015)\citenamefont {Wang},
  \citenamefont {Zhang}, \citenamefont {Corcovilos}, \citenamefont {Kumar},\
  and\ \citenamefont {Weiss}}]{Wang2015}%
  \BibitemOpen
  \bibfield  {author} {\bibinfo {author} {\bibfnamefont {Y.}~\bibnamefont
  {Wang}}, \bibinfo {author} {\bibfnamefont {X.}~\bibnamefont {Zhang}},
  \bibinfo {author} {\bibfnamefont {T.~A.}\ \bibnamefont {Corcovilos}},
  \bibinfo {author} {\bibfnamefont {A.}~\bibnamefont {Kumar}}, \ and\ \bibinfo
  {author} {\bibfnamefont {D.~S.}\ \bibnamefont {Weiss}},\ }\href {\doibase
  10.1103/PhysRevLett.115.043003} {\bibfield  {journal} {\bibinfo  {journal}
  {Phys. Rev. Lett.}\ }\textbf {\bibinfo {volume} {115}},\ \bibinfo {pages}
  {043003} (\bibinfo {year} {2015})}\BibitemShut {NoStop}%
\bibitem [{Note5()}]{Note5}%
  \BibitemOpen
  \bibinfo {note} {All of our arguments go through even if the initial state is
  not regular as in Fig.\ \ref {fig_lattice} but still has the property that
  the initially occupied sites are separated by a minimum distance of $\Theta
  (L)$ from each other.}\BibitemShut {Stop}%
\bibitem [{\citenamefont {Fukuhara}\ \emph {et~al.}(2013)\citenamefont
  {Fukuhara}, \citenamefont {Kantian}, \citenamefont {Endres}, \citenamefont
  {Cheneau}, \citenamefont {Schauß}, \citenamefont {Hild}, \citenamefont
  {Bellem}, \citenamefont {Schollwöck}, \citenamefont {Giamarchi},
  \citenamefont {Gross}, \citenamefont {Bloch},\ and\ \citenamefont
  {Kuhr}}]{Fukuhara2013}%
  \BibitemOpen
  \bibfield  {author} {\bibinfo {author} {\bibfnamefont {T.}~\bibnamefont
  {Fukuhara}}, \bibinfo {author} {\bibfnamefont {A.}~\bibnamefont {Kantian}},
  \bibinfo {author} {\bibfnamefont {M.}~\bibnamefont {Endres}}, \bibinfo
  {author} {\bibfnamefont {M.}~\bibnamefont {Cheneau}}, \bibinfo {author}
  {\bibfnamefont {P.}~\bibnamefont {Schauß}}, \bibinfo {author} {\bibfnamefont
  {S.}~\bibnamefont {Hild}}, \bibinfo {author} {\bibfnamefont {D.}~\bibnamefont
  {Bellem}}, \bibinfo {author} {\bibfnamefont {U.}~\bibnamefont {Schollwöck}},
  \bibinfo {author} {\bibfnamefont {T.}~\bibnamefont {Giamarchi}}, \bibinfo
  {author} {\bibfnamefont {C.}~\bibnamefont {Gross}}, \bibinfo {author}
  {\bibfnamefont {I.}~\bibnamefont {Bloch}}, \ and\ \bibinfo {author}
  {\bibfnamefont {S.}~\bibnamefont {Kuhr}},\ }\href {\doibase
  10.1038/nphys2561} {\bibfield  {journal} {\bibinfo  {journal} {Nat. Phys.}\
  }\textbf {\bibinfo {volume} {9}},\ \bibinfo {pages} {235} (\bibinfo {year}
  {2013})}\BibitemShut {NoStop}%
\bibitem [{\citenamefont {Paul}\ \emph {et~al.}(2016)\citenamefont {Paul},
  \citenamefont {Johnson},\ and\ \citenamefont {Tiesinga}}]{Paul2016}%
  \BibitemOpen
  \bibfield  {author} {\bibinfo {author} {\bibfnamefont {S.}~\bibnamefont
  {Paul}}, \bibinfo {author} {\bibfnamefont {P.~R.}\ \bibnamefont {Johnson}}, \
  and\ \bibinfo {author} {\bibfnamefont {E.}~\bibnamefont {Tiesinga}},\ }\href
  {\doibase 10.1103/PhysRevA.93.043616} {\bibfield  {journal} {\bibinfo
  {journal} {Phys. Rev. A}\ }\textbf {\bibinfo {volume} {93}},\ \bibinfo
  {pages} {043616} (\bibinfo {year} {2016})}\BibitemShut {NoStop}%
\bibitem [{\citenamefont {Xu}\ \emph {et~al.}(2003)\citenamefont {Xu},
  \citenamefont {Mukaiyama}, \citenamefont {Abo-Shaeer}, \citenamefont {Chin},
  \citenamefont {Miller},\ and\ \citenamefont {Ketterle}}]{Xu2003}%
  \BibitemOpen
  \bibfield  {author} {\bibinfo {author} {\bibfnamefont {K.}~\bibnamefont
  {Xu}}, \bibinfo {author} {\bibfnamefont {T.}~\bibnamefont {Mukaiyama}},
  \bibinfo {author} {\bibfnamefont {J.~R.}\ \bibnamefont {Abo-Shaeer}},
  \bibinfo {author} {\bibfnamefont {J.~K.}\ \bibnamefont {Chin}}, \bibinfo
  {author} {\bibfnamefont {D.~E.}\ \bibnamefont {Miller}}, \ and\ \bibinfo
  {author} {\bibfnamefont {W.}~\bibnamefont {Ketterle}},\ }\href {\doibase
  10.1103/PhysRevLett.91.210402} {\bibfield  {journal} {\bibinfo  {journal}
  {Phys. Rev. Lett.}\ }\textbf {\bibinfo {volume} {91}},\ \bibinfo {pages}
  {210402} (\bibinfo {year} {2003})}\BibitemShut {NoStop}%
\bibitem [{\citenamefont {Herbig}(2003)}]{Herbig2003a}%
  \BibitemOpen
  \bibfield  {author} {\bibinfo {author} {\bibfnamefont {J.}~\bibnamefont
  {Herbig}},\ }\href {\doibase 10.1126/science.1088876} {\bibfield  {journal}
  {\bibinfo  {journal} {Science}\ }\textbf {\bibinfo {volume} {301}},\ \bibinfo
  {pages} {1510} (\bibinfo {year} {2003})}\BibitemShut {NoStop}%
\bibitem [{\citenamefont {Dürr}\ \emph {et~al.}(2004)\citenamefont {Dürr},
  \citenamefont {Volz}, \citenamefont {Marte},\ and\ \citenamefont
  {Rempe}}]{Durr2004}%
  \BibitemOpen
  \bibfield  {author} {\bibinfo {author} {\bibfnamefont {S.}~\bibnamefont
  {Dürr}}, \bibinfo {author} {\bibfnamefont {T.}~\bibnamefont {Volz}},
  \bibinfo {author} {\bibfnamefont {A.}~\bibnamefont {Marte}}, \ and\ \bibinfo
  {author} {\bibfnamefont {G.}~\bibnamefont {Rempe}},\ }\href {\doibase
  10.1103/PhysRevLett.92.020406} {\bibfield  {journal} {\bibinfo  {journal}
  {Phys. Rev. Lett.}\ }\textbf {\bibinfo {volume} {92}},\ \bibinfo {pages}
  {020406} (\bibinfo {year} {2004})}\BibitemShut {NoStop}%
\bibitem [{\citenamefont {Kaufman}\ \emph {et~al.}(2014)\citenamefont
  {Kaufman}, \citenamefont {Lester}, \citenamefont {Reynolds}, \citenamefont
  {Wall}, \citenamefont {Foss-Feig}, \citenamefont {Hazzard}, \citenamefont
  {Rey},\ and\ \citenamefont {Regal}}]{Kaufman2014}%
  \BibitemOpen
  \bibfield  {author} {\bibinfo {author} {\bibfnamefont {A.~M.}\ \bibnamefont
  {Kaufman}}, \bibinfo {author} {\bibfnamefont {B.~J.}\ \bibnamefont {Lester}},
  \bibinfo {author} {\bibfnamefont {C.~M.}\ \bibnamefont {Reynolds}}, \bibinfo
  {author} {\bibfnamefont {M.~L.}\ \bibnamefont {Wall}}, \bibinfo {author}
  {\bibfnamefont {M.}~\bibnamefont {Foss-Feig}}, \bibinfo {author}
  {\bibfnamefont {K.~R.~A.}\ \bibnamefont {Hazzard}}, \bibinfo {author}
  {\bibfnamefont {A.~M.}\ \bibnamefont {Rey}}, \ and\ \bibinfo {author}
  {\bibfnamefont {C.~A.}\ \bibnamefont {Regal}},\ }\href {\doibase
  10.1126/science.1250057} {\bibfield  {journal} {\bibinfo  {journal}
  {Science}\ }\textbf {\bibinfo {volume} {345}},\ \bibinfo {pages} {306}
  (\bibinfo {year} {2014})}\BibitemShut {NoStop}%
\bibitem [{\citenamefont {Endres}\ \emph {et~al.}(2016)\citenamefont {Endres},
  \citenamefont {Bernien}, \citenamefont {Keesling}, \citenamefont {Levine},
  \citenamefont {Anschuetz}, \citenamefont {Krajenbrink}, \citenamefont
  {Senko}, \citenamefont {Vuletic}, \citenamefont {Greiner},\ and\
  \citenamefont {Lukin}}]{Endres2016}%
  \BibitemOpen
  \bibfield  {author} {\bibinfo {author} {\bibfnamefont {M.}~\bibnamefont
  {Endres}}, \bibinfo {author} {\bibfnamefont {H.}~\bibnamefont {Bernien}},
  \bibinfo {author} {\bibfnamefont {A.}~\bibnamefont {Keesling}}, \bibinfo
  {author} {\bibfnamefont {H.}~\bibnamefont {Levine}}, \bibinfo {author}
  {\bibfnamefont {E.~R.}\ \bibnamefont {Anschuetz}}, \bibinfo {author}
  {\bibfnamefont {A.}~\bibnamefont {Krajenbrink}}, \bibinfo {author}
  {\bibfnamefont {C.}~\bibnamefont {Senko}}, \bibinfo {author} {\bibfnamefont
  {V.}~\bibnamefont {Vuletic}}, \bibinfo {author} {\bibfnamefont
  {M.}~\bibnamefont {Greiner}}, \ and\ \bibinfo {author} {\bibfnamefont
  {M.~D.}\ \bibnamefont {Lukin}},\ }\href {\doibase 10.1126/science.aah3752}
  {\bibfield  {journal} {\bibinfo  {journal} {Science}\ }\textbf {\bibinfo
  {volume} {354}},\ \bibinfo {pages} {1024} (\bibinfo {year}
  {2016})}\BibitemShut {NoStop}%
\bibitem [{\citenamefont {Bernien}\ \emph {et~al.}(2017)\citenamefont
  {Bernien}, \citenamefont {Schwartz}, \citenamefont {Keesling}, \citenamefont
  {Levine}, \citenamefont {Omran}, \citenamefont {Pichler}, \citenamefont
  {Choi}, \citenamefont {Zibrov}, \citenamefont {Endres}, \citenamefont
  {Greiner}, \citenamefont {Vuletić},\ and\ \citenamefont
  {Lukin}}]{Bernien2017}%
  \BibitemOpen
  \bibfield  {author} {\bibinfo {author} {\bibfnamefont {H.}~\bibnamefont
  {Bernien}}, \bibinfo {author} {\bibfnamefont {S.}~\bibnamefont {Schwartz}},
  \bibinfo {author} {\bibfnamefont {A.}~\bibnamefont {Keesling}}, \bibinfo
  {author} {\bibfnamefont {H.}~\bibnamefont {Levine}}, \bibinfo {author}
  {\bibfnamefont {A.}~\bibnamefont {Omran}}, \bibinfo {author} {\bibfnamefont
  {H.}~\bibnamefont {Pichler}}, \bibinfo {author} {\bibfnamefont
  {S.}~\bibnamefont {Choi}}, \bibinfo {author} {\bibfnamefont {A.~S.}\
  \bibnamefont {Zibrov}}, \bibinfo {author} {\bibfnamefont {M.}~\bibnamefont
  {Endres}}, \bibinfo {author} {\bibfnamefont {M.}~\bibnamefont {Greiner}},
  \bibinfo {author} {\bibfnamefont {V.}~\bibnamefont {Vuletić}}, \ and\
  \bibinfo {author} {\bibfnamefont {M.~D.}\ \bibnamefont {Lukin}},\ }\href
  {\doibase 10.1038/nature24622} {\bibfield  {journal} {\bibinfo  {journal}
  {Nature}\ }\textbf {\bibinfo {volume} {551}},\ \bibinfo {pages} {579}
  (\bibinfo {year} {2017})}\BibitemShut {NoStop}%
\bibitem [{\citenamefont {Flurin}\ \emph {et~al.}(2017)\citenamefont {Flurin},
  \citenamefont {Ramasesh}, \citenamefont {Hacohen-Gourgy}, \citenamefont
  {Martin}, \citenamefont {Yao},\ and\ \citenamefont {Siddiqi}}]{Flurin2017}%
  \BibitemOpen
  \bibfield  {author} {\bibinfo {author} {\bibfnamefont {E.}~\bibnamefont
  {Flurin}}, \bibinfo {author} {\bibfnamefont {V.~V.}\ \bibnamefont
  {Ramasesh}}, \bibinfo {author} {\bibfnamefont {S.}~\bibnamefont
  {Hacohen-Gourgy}}, \bibinfo {author} {\bibfnamefont {L.~S.}\ \bibnamefont
  {Martin}}, \bibinfo {author} {\bibfnamefont {N.~Y.}\ \bibnamefont {Yao}}, \
  and\ \bibinfo {author} {\bibfnamefont {I.}~\bibnamefont {Siddiqi}},\ }\href
  {\doibase 10.1103/PhysRevX.7.031023} {\bibfield  {journal} {\bibinfo
  {journal} {Phys. Rev. X}\ }\textbf {\bibinfo {volume} {7}},\ \bibinfo {pages}
  {031023} (\bibinfo {year} {2017})}\BibitemShut {NoStop}%
\bibitem [{\citenamefont {Hacohen-Gourgy}\ \emph {et~al.}(2015)\citenamefont
  {Hacohen-Gourgy}, \citenamefont {Ramasesh}, \citenamefont {De~Grandi},
  \citenamefont {Siddiqi},\ and\ \citenamefont {Girvin}}]{Hacohen-Gourgy2015}%
  \BibitemOpen
  \bibfield  {author} {\bibinfo {author} {\bibfnamefont {S.}~\bibnamefont
  {Hacohen-Gourgy}}, \bibinfo {author} {\bibfnamefont {V.~V.}\ \bibnamefont
  {Ramasesh}}, \bibinfo {author} {\bibfnamefont {C.}~\bibnamefont {De~Grandi}},
  \bibinfo {author} {\bibfnamefont {I.}~\bibnamefont {Siddiqi}}, \ and\
  \bibinfo {author} {\bibfnamefont {S.~M.}\ \bibnamefont {Girvin}},\ }\href
  {\doibase 10.1103/PhysRevLett.115.240501} {\bibfield  {journal} {\bibinfo
  {journal} {Phys. Rev. Lett.}\ }\textbf {\bibinfo {volume} {115}},\ \bibinfo
  {pages} {240501} (\bibinfo {year} {2015})}\BibitemShut {NoStop}%
\bibitem [{\citenamefont {Deng}\ \emph {et~al.}(2016)\citenamefont {Deng},
  \citenamefont {Lai},\ and\ \citenamefont {Chien}}]{Deng2016}%
  \BibitemOpen
  \bibfield  {author} {\bibinfo {author} {\bibfnamefont {X.-H.}\ \bibnamefont
  {Deng}}, \bibinfo {author} {\bibfnamefont {C.-Y.}\ \bibnamefont {Lai}}, \
  and\ \bibinfo {author} {\bibfnamefont {C.-C.}\ \bibnamefont {Chien}},\ }\href
  {\doibase 10.1103/PhysRevB.93.054116} {\bibfield  {journal} {\bibinfo
  {journal} {Phys. Rev. B}\ }\textbf {\bibinfo {volume} {93}} (\bibinfo {year}
  {2016}),\ 10.1103/PhysRevB.93.054116}\BibitemShut {NoStop}%
\bibitem [{\citenamefont {Roushan}\ \emph {et~al.}(2017)\citenamefont
  {Roushan}, \citenamefont {Neill}, \citenamefont {Tangpanitanon},
  \citenamefont {Bastidas}, \citenamefont {Megrant}, \citenamefont {Barends},
  \citenamefont {Chen}, \citenamefont {Chen}, \citenamefont {Chiaro},
  \citenamefont {Dunsworth}, \citenamefont {Fowler}, \citenamefont {Foxen},
  \citenamefont {Giustina}, \citenamefont {Jeffrey}, \citenamefont {Kelly},
  \citenamefont {Lucero}, \citenamefont {Mutus}, \citenamefont {Neeley},
  \citenamefont {Quintana}, \citenamefont {Sank}, \citenamefont {Vainsencher},
  \citenamefont {Wenner}, \citenamefont {White}, \citenamefont {Neven},
  \citenamefont {Angelakis},\ and\ \citenamefont {Martinis}}]{Roushan2017}%
  \BibitemOpen
  \bibfield  {author} {\bibinfo {author} {\bibfnamefont {P.}~\bibnamefont
  {Roushan}}, \bibinfo {author} {\bibfnamefont {C.}~\bibnamefont {Neill}},
  \bibinfo {author} {\bibfnamefont {J.}~\bibnamefont {Tangpanitanon}}, \bibinfo
  {author} {\bibfnamefont {V.~M.}\ \bibnamefont {Bastidas}}, \bibinfo {author}
  {\bibfnamefont {A.}~\bibnamefont {Megrant}}, \bibinfo {author} {\bibfnamefont
  {R.}~\bibnamefont {Barends}}, \bibinfo {author} {\bibfnamefont
  {Y.}~\bibnamefont {Chen}}, \bibinfo {author} {\bibfnamefont {Z.}~\bibnamefont
  {Chen}}, \bibinfo {author} {\bibfnamefont {B.}~\bibnamefont {Chiaro}},
  \bibinfo {author} {\bibfnamefont {A.}~\bibnamefont {Dunsworth}}, \bibinfo
  {author} {\bibfnamefont {A.}~\bibnamefont {Fowler}}, \bibinfo {author}
  {\bibfnamefont {B.}~\bibnamefont {Foxen}}, \bibinfo {author} {\bibfnamefont
  {M.}~\bibnamefont {Giustina}}, \bibinfo {author} {\bibfnamefont
  {E.}~\bibnamefont {Jeffrey}}, \bibinfo {author} {\bibfnamefont
  {J.}~\bibnamefont {Kelly}}, \bibinfo {author} {\bibfnamefont
  {E.}~\bibnamefont {Lucero}}, \bibinfo {author} {\bibfnamefont
  {J.}~\bibnamefont {Mutus}}, \bibinfo {author} {\bibfnamefont
  {M.}~\bibnamefont {Neeley}}, \bibinfo {author} {\bibfnamefont
  {C.}~\bibnamefont {Quintana}}, \bibinfo {author} {\bibfnamefont
  {D.}~\bibnamefont {Sank}}, \bibinfo {author} {\bibfnamefont {A.}~\bibnamefont
  {Vainsencher}}, \bibinfo {author} {\bibfnamefont {J.}~\bibnamefont {Wenner}},
  \bibinfo {author} {\bibfnamefont {T.}~\bibnamefont {White}}, \bibinfo
  {author} {\bibfnamefont {H.}~\bibnamefont {Neven}}, \bibinfo {author}
  {\bibfnamefont {D.~G.}\ \bibnamefont {Angelakis}}, \ and\ \bibinfo {author}
  {\bibfnamefont {J.}~\bibnamefont {Martinis}},\ }\href@noop {} {\  (\bibinfo
  {year} {2017})},\ \Eprint {http://arxiv.org/abs/1709.07108}
  {arXiv:1709.07108} \BibitemShut {NoStop}%
\bibitem [{\citenamefont {Chen}\ \emph {et~al.}(2014)\citenamefont {Chen},
  \citenamefont {Neill}, \citenamefont {Roushan}, \citenamefont {Leung},
  \citenamefont {Fang}, \citenamefont {Barends}, \citenamefont {Kelly},
  \citenamefont {Campbell}, \citenamefont {Chen}, \citenamefont {Chiaro},
  \citenamefont {Dunsworth}, \citenamefont {Jeffrey}, \citenamefont {Megrant},
  \citenamefont {Mutus}, \citenamefont {O'Malley}, \citenamefont {Quintana},
  \citenamefont {Sank}, \citenamefont {Vainsencher}, \citenamefont {Wenner},
  \citenamefont {White}, \citenamefont {Geller}, \citenamefont {Cleland},\ and\
  \citenamefont {Martinis}}]{Chen2014}%
  \BibitemOpen
  \bibfield  {author} {\bibinfo {author} {\bibfnamefont {Y.}~\bibnamefont
  {Chen}}, \bibinfo {author} {\bibfnamefont {C.}~\bibnamefont {Neill}},
  \bibinfo {author} {\bibfnamefont {P.}~\bibnamefont {Roushan}}, \bibinfo
  {author} {\bibfnamefont {N.}~\bibnamefont {Leung}}, \bibinfo {author}
  {\bibfnamefont {M.}~\bibnamefont {Fang}}, \bibinfo {author} {\bibfnamefont
  {R.}~\bibnamefont {Barends}}, \bibinfo {author} {\bibfnamefont
  {J.}~\bibnamefont {Kelly}}, \bibinfo {author} {\bibfnamefont
  {B.}~\bibnamefont {Campbell}}, \bibinfo {author} {\bibfnamefont
  {Z.}~\bibnamefont {Chen}}, \bibinfo {author} {\bibfnamefont {B.}~\bibnamefont
  {Chiaro}}, \bibinfo {author} {\bibfnamefont {A.}~\bibnamefont {Dunsworth}},
  \bibinfo {author} {\bibfnamefont {E.}~\bibnamefont {Jeffrey}}, \bibinfo
  {author} {\bibfnamefont {A.}~\bibnamefont {Megrant}}, \bibinfo {author}
  {\bibfnamefont {J.~Y.}\ \bibnamefont {Mutus}}, \bibinfo {author}
  {\bibfnamefont {P.~J.~J.}\ \bibnamefont {O'Malley}}, \bibinfo {author}
  {\bibfnamefont {C.~M.}\ \bibnamefont {Quintana}}, \bibinfo {author}
  {\bibfnamefont {D.}~\bibnamefont {Sank}}, \bibinfo {author} {\bibfnamefont
  {A.}~\bibnamefont {Vainsencher}}, \bibinfo {author} {\bibfnamefont
  {J.}~\bibnamefont {Wenner}}, \bibinfo {author} {\bibfnamefont {T.~C.}\
  \bibnamefont {White}}, \bibinfo {author} {\bibfnamefont {M.~R.}\ \bibnamefont
  {Geller}}, \bibinfo {author} {\bibfnamefont {A.~N.}\ \bibnamefont {Cleland}},
  \ and\ \bibinfo {author} {\bibfnamefont {J.~M.}\ \bibnamefont {Martinis}},\
  }\href {\doibase 10.1103/PhysRevLett.113.220502} {\bibfield  {journal}
  {\bibinfo  {journal} {Phys. Rev. Lett.}\ }\textbf {\bibinfo {volume} {113}},\
  \bibinfo {pages} {220502} (\bibinfo {year} {2014})}\BibitemShut {NoStop}%
\bibitem [{\citenamefont {Muraleedharan}\ \emph {et~al.}(2018)\citenamefont
  {Muraleedharan}, \citenamefont {Miyake},\ and\ \citenamefont
  {Deutsch}}]{Muraleedharan2018}%
  \BibitemOpen
  \bibfield  {author} {\bibinfo {author} {\bibfnamefont {G.}~\bibnamefont
  {Muraleedharan}}, \bibinfo {author} {\bibfnamefont {A.}~\bibnamefont
  {Miyake}}, \ and\ \bibinfo {author} {\bibfnamefont {I.~H.}\ \bibnamefont
  {Deutsch}},\ }\href@noop {} {\  (\bibinfo {year} {2018})},\ \Eprint
  {http://arxiv.org/abs/1805.01858} {arXiv:1805.01858} \BibitemShut {NoStop}%
\end{thebibliography}
\end{document}